\pgfplotsset{compat=newest}
\newcommand{\cspi}{\texttt{cLPI}\xspace}
\newcommand{\te}{\texttt{TE}\xspace}
\newcommand{\jsg}{\texttt{JSG}\xspace}
\newcommand{\fsi}{\texttt{FI}\xspace}
\newcommand{\ii}{\texttt{II}\xspace}
\newcommand{\tb}{\texttt{TB}\xspace}
\newcommand{\cpl}{\texttt{CPL}\xspace}
\begin{document}

\title{Length-Bounded Paths Interdiction in Continuous Domain for Network Performance Assessment}

\author{Lan N. Nguyen and My T. Thai 
\thanks{

Lan N. Nguyen and My T. Thai are with the Department of Computer and Information Science and Engineering, University of Florida, Gainesville, FL 32611 USA (email: lan.nguyen@ufl.edu, mythai@cise.ufl.edu).
}}

\maketitle

\begin{abstract}
Studying on networked systems, in which a communication between nodes is functional if their distance under a given metric is lower than a pre-defined threshold, has received significant attention recently.  In this work, we propose a metric to measure network resilience on guaranteeing the pre-defined performance constraint. This metric is investigated under an optimization problem, namely \textbf{Length-bounded Paths Interdiction in Continuous Domain} (\cspi), which aims to identify a minimum set of nodes whose changes cause routing paths between nodes become undesirable for the network service.


We show the problem is NP-hard and propose a framework by designing two oracles, \textit{Threshold Blocking} (\tb) and \textit{Critical Path Listing} (\cpl), which communicate back and forth to construct a feasible solution to \cspi with theoretical bicriteria approximation guarantees. Based on this framework, we propose two solutions for each oracle. Each combination of one solution to \tb and one solution to \cpl gives us a solution to \cspi. The bicriteria guarantee of our algorithms allows us to control the solutions's trade-off between the returned size and the performance accuracy. New insights into the advantages of each solution are further discussed via experimental analysis.

\end{abstract}

%
%



\IEEEpeerreviewmaketitle

\newtheorem{mydef}{Definition}
\newtheorem{lemma}{Lemma}
\newtheorem{theorem}{Theorem}
\newtheorem{Corollary}{Corollary}
\newtheorem{Observation}{Observation}
\algdef{SE}[DOWHILE]{Do}{doWhile}{\algorithmicdo}[1]{\algorithmicwhile\ #1}%
\DeclarePairedDelimiter\ceil{\lceil}{\rceil}
\DeclarePairedDelimiter\floor{\lfloor}{\rfloor}
\DeclarePairedDelimiter\norm{\lVert}{\rVert}
\newenvironment{skproof}{%
  \renewcommand{\proofname}{Proof overview}\proof}{\endproof}

\section{Introduction} \label{sec:introduction}

Components of a network do not have the same important level. There always exists a set of nodes or edges which plays more critical role than the others on assessing networks' performance. Literature has spent a significant effort on identifying such a set whose removal maximally damages a network's functionality. Most of early efforts used the connectivity metric, in which a connection between two nodes is functional if there exists a path connecting them \cite{garg1997primal, chawla2006hardness, svitkina2004min, dahlhaus1992complexity, dinh2015assessing, dinh2015network}. 
However, as modern networks evolved, purely relying on connectivity is no long sufficient to guarantee a networks' functionality or quality of services. Further, instead of removing nodes/edges, a change on components' behavior can downgrade the whole system's performance. For example, a congestion or traffic jams \cite{checkoway2011comprehensive, chen2018exposing} on some routers can significantly delay communication between end systems, downgrading their quality of services. 

Motivated by the above observations, recent researches turn their attention to network malfunction without damaging the connectivity. For example, Kuhnle et al. \cite{kuhnle2018network} studied the \texttt{LB-MULTICUT} problem, which aims for a minimum set of edges whose removal causes the shortest distance, in term of edge weights, between targeted pairs of nodes exceed a threshold. The threshold represents constraints for the networks in order to guarantee quality of services. By discarding the ``removal" flavour, Nguyen et al. \cite{nguyen2019network} extended this concept to introduce the \texttt{QoSD} problem, in which an edge weight can be varied with an amount of efforts, defined in the discrete domain, and the problem asks for a minimum amount of efforts for the same objective as in \texttt{LB-MULTICUT}. 


However, these existing works are all in the combinatorial optimization, which do not capture well the continuous settings. For example, in network routing, factors that impact network components' latency or packet loss rate include: traffic rate \cite{safaei2006software}, power of the interfering signal and noises \cite{alvizu2012hybrid}, denial-of-service attacks \cite{wangen2016cyber}. Those factors are quantified under continuous variables. 

Therefore, in this work, we take a further step on network performance assessment by introducing the \cspi{} problem as follows: Given a directed network $G=(V,E)$, a set $S$ of target pairs of nodes and a threshold $T$, each node $v \in V$ is associated with a monotone non-decreasing function $f_v: \mathbb{R}^\geq \rightarrow \mathbb{R}^\geq$, the \cspi problem asks for an impact vector $\mathbf{x} = \{x_v\}_{v \in V}$ with minimum $\norm{\mathbf{x}}$ such that any path $p$, connecting a pair in $S$, satisfies $\sum_{v \in p} f_v(x_v) \geq T$. Intuitively, $x_v$ represents the external impact's level to node $v$; while $f_v(x_v)$ quantifies $v$'s behaviors in response to the impact. $T$ represents the network constraint in order to guarantee quality of services, e.g. low latency. A solution $\mathbf{x}$ of \cspi can be used to assess network resilience to the external impact. 
Specifically, large $\norm{\mathbf{x}}$ indicates the network is resilient to external interference and able to maintain quality of service under extreme environment. Furthermore, a value of $x_v$ indicates the important level of node $v$ to the network desired functionality.

Solving \cspi with bounded performance guarantee is challenging, indeed. If for all path $p$, $\sum_{v \in p} f_v(x_v) \geq T$ exhibits convexity, then \cspi can be solved optimally by using ellipsoid method \cite{grotschel1981ellipsoid} with a polynomial feasible-check oracle. However, that is not always the case. Studies on network latency w.r.t impact factor like traffic rate shows $\sum_{v \in p} f_v(x_v) \geq T$ is not convex \cite{wangen2016cyber, feng2019feasibility}. That also rules out the possibility of applying any other Convex Optimization technique. In term of packet loss rate, the behaviors are even more complicated \cite{alvizu2012hybrid}. Indeed, we show that \cspi with general functions are NP-hard problem. Thus, in this work, we aim for a general solution that can applied on any monotone non-decreasing functions $f_v$. 


{\bf Contributions.} In addition to introduce the \cspi problem, the main contributions of this work are:

\begin{itemize}
    \item We propose a general framework for solving \cspi, separating tasks into two different oracles, called \textit{Critical Paths Listing} (\cpl) and \textit{Threshold Blocking} (\tb). \cpl's job is to restrict the amount of paths considered for finding a feasible solution of \cspi. \tb handles the task of finding $\mathbf{x}$, guaranteeing all paths, returned by \cpl, having lengths exceeded $T$. For each oracle, we design two algorithmic solutions. Different combination of any \cpl and \tb algorithms provides different performance theoretically and practically.
    \item All of our solutions have bicriteria approximation ratios, which could allow a user to control the trade-off between runtime versus accuracy.
    \item We extensively evaluate our solutions on real-world AS networks. The experiments show our algorithms outperforms existing solutions of special problems of \cspi in solution quality. We then shed a new insight on advantages of each algorithms.
\end{itemize}

\textbf{Organization}. The rest of the paper is organized as follows. Section \ref{sec:related} reviews literature related to our problem. In Section \ref{sec:problem}, we formally define the \cspi problem, discuss its challenges and overall framework of our solutions. Section \ref{sec:tb} presents two algorithms for the \tb oracle while the ones for \cpl are described in Section \ref{sec:cpl}. In Section \ref{sec:experiment}, practical analysis on algorithms' performance is provided. Finally, Section \ref{sec:conclusion} concludes the paper.

\section{Application and Related Work} \label{sec:related}

We first discuss a key application of \cspi in network performance assessment and next highlight the most relevant related work to \cspi. 

\subsection{\cspi in network performance assessment}

A routing protocol specifies how routers communicate with each other to distribute information that enables them to select routes between any two nodes on a computer network \cite{clausen2003optimized, waitzman1988distance, moy1998ospf}. The specific characteristics of routing protocols include the manner in which they avoid routing loops and select preferred routes, using information about hop costs. With the introduction of Software-Defined-Networking \cite{kreutz2014software,xia2014survey}, a hop cost can vary from different metrics, serving for different purposes of network administrators. 

The most common used metric for network vulnerability is network latency. Ideally, communication between hosts in the network is routed in the shortest path, weighted by latency of nodes. On the other hand, to guarantee quality of services (e.g. low latency) or avoid unexpected routing scheme (e.g. inter-continent routing with intra-traffic), a limit on network latency can be set so that the routing path has to have latency lower than a threshold. If there exists no routing path with total latency lower than the threshold, the network is considered to be undesirable for required services \cite{kuhnle2018network, nguyen2019network}. 

In the context of \cspi, to model the external impact to a hop latency, each node $v$ (e.g routers) in the network is associated with a function $d_v(x)$ where $x$ quantifies the impact (e.g. traffic rate, noise); and $d_v(x)$ measure the latency of $v$ with the impact $x$. Denote $T$ as the latency threshold. Studying \cspi helps identify the impact levels on nodes/edges that required to damage the networking quality of services, thus providing a useful metric for network design and assessment.

Beside latency, another routing metric can be used is packet loss probability. A routing path with high probability (say at least $90\%$) of successful delivery is preferred. Unlike latency, in term of packet loss probability, a simple trick needs to be applied. Let $\rho_v(x)$ denote the loss probability of a packet if going through node $v$ given the external impact amount $x$; and $P$ is the expected successful probability of a routing path. Then the network routing is not functional if for a routing path $p$, $\prod_{v \in p} (1-\rho_v(x)) \leq P$. This equation is adapted to \cspi as $\sum_{v \in p} -\ln (1-\rho_v(x)) \geq - \ln P$. 

\subsection{Existing Algorithms}

The early work on network resilience assessment with constraints on distance between node pairs are \texttt{LB-MULTICUT}, \cite{kuhnle2018network}, Critical Node Detection \cite{shen2013discovery, nguyen2013detecting}, Multicut \cite{garg1997primal, chawla2006hardness}. With the objective to make all pairs' distance to be at least $T$ or be disconnected, the problem asks for a minimum set of edges or nodes to be removed. One way to apply their solutions to \cspi is by introducing a step of discretization of function $f_v$. In the context of node removal, the cost of cutting a node $v$ is represented by value $x$ where $f_v(x) = T$. Our experimental results, unfortunately, shows solving \cspi by this method returns undesirable solutions in some cases. Other than that intuitive adoption, it is unclear how to convert the work of edge/node removal to the flavour of increase edge/node weight as an instance of \cspi. 

Without targeting for the edge/node removal, the \texttt{QoSD} problem introduces a discrete function $b_v: \mathbb{Z}^\geq \rightarrow \mathbb{Z}^+$ associated with each node $v$ of the network; and asks for a $\mathbf{x}$ in integer lattice that any path $p$ connecting a given node pairs has length exceeding $T$, i.e. $\sum_{v \in p} b_v(x_v) \geq T$ \cite{nguyen2019network}. One may think to discretize functions $f_v$ and directly adopt solutions of \texttt{QoSD} to solve \cspi. However, the discretization of $f_v$ is simply a work of taking an integer $x$ and returning the value $f_v(x \times \delta)$, where $\delta$ is called discretizing step. If $\delta$ is too large, the returned solution will be far from optimum due to discretization error; otherwise small $\delta$ creates significantly large inputs for \texttt{QoSD}, causing a burden on memory usage and undesirable runtime. Therefore, a solution, which can directly applied into continuous domain, is more desired.

\cspi can be modeled under a Constrained Optimization formulation that minimize $\sum_{v \in V} x_v$ with constraints $\sum_{v \in p} f_v(x_v) \geq T$ for all paths $p$ connecting pairs and $x_v \geq 0$
for all $v \in V$. 
The first constraint is to guarantee that all paths connecting target pairs have the length exceeding threshold $T$. Constrained Optimization is a classical problem, on which significant amount of works have been investigated, including (but not limited to) \cite{bertsekas2014constrained, gill2005snopt, runarsson2000stochastic}. However, a major concern of applying those solutions to \cspi is that a set of constraints is required to be known beforehand. In the case of dense network, the set of constraints reach to $\sum_{k=2}^n \binom{n}{k} k!$ paths and an ``infinite" period only for enumerating them. Furthermore, even we can list all constraints, those methods meet another obstacle that edge weight functions could be any function with complex behaviors. Existing methods can easily end up to local convergence trap without any performance guarantee. Therefore, a solution, which helps reduce burden of path listing while providing a performance ratio, is more desirable. That is a focus of our work.

\section{Preliminaries} \label{sec:problem}

\subsection{Problem Formulation}
In this part, we formally define the $\cspi$ problem and notations used frequently in our algorithms.  

We abstract the network using a directed graph $G=(V,E)$ with $|V|=n$ nodes and $|E| = m$ directed edges. Each node $v$ is associated with a function $f_v: \mathbb{R}^\geq \rightarrow \mathbb{R}^\geq$ which indicates the weight (e.g. latency, loss rate) of $v$ w.r.t an impact amount on $v$. In another word, if external impact of an amount of $x$ is put on $v$, the weight of node $v$ will become $f_v(x)$. $f_v$ is monotonically non-decreasing for all $v \in V$, which can be intuitively explained by: the more impact are put on $v$, the worse $v$ behaves (e.g. long latency, high loss rate).  

Given $V=\{v_1,...v_n\}$, we denotes the impact in form of a vector $\mathbf{x} = \{x_1,...x_n\}$ where $x_n$ is an impact on node $v_i$. For simplicity, we use the notation $v$ to present a node in $V$ and its index in $V$ also. So $x_v$ means the impact on node $v$, and the entry in $\mathbf{x}$ corresponding to $v$ also. The overall impact on all nodes, therefore, is $\norm{\mathbf{x}} = \sum_{v \in V} x_v$.


A path $p = \{u_0,u_1,...u_l\} \in G$ is a sequence of vertices such that $(u_{i-1},u_i) \in E$ for $i=1,..,l$. A path can also be understood as the sequence of edges $\{(u_0,u_1), (u_1,u_2),... (u_{l-1}, u_l)\}$. In this work, a path is used interchangeably as a sequence of edges or a sequence of nodes. 

Under an impact vector $\mathbf{x}$, the length of a path $p$ is denoted as $d_{\mathbf{x}}(p)$ where $d_{\mathbf{x}}(p) = \min\big( \sum_{v \in p} f_v(x_v), T \big)$. The $\min$ term is to bound a path's length by $T$. Since we only care about paths of length at most $T$, this bound does not impact our algorithms' results or the problem's generality.

We abuse the notation by also using $d$ to denote distance between two nodes in the network. To be specific, $d_\mathbf{x}(u,v)$ denotes distance between node $u$ and $v$ under $\mathbf{x}$, i.e $d_\mathbf{x}(u,v) = \min_{p = \{u,...,v\}} d_\mathbf{x}(p)$.

A \textit{single path} is a path that there exists no node who appears more than once in the path. Let $\mathcal{P}_i$ denote a set of simple paths connecting the pair $(s_i,t_i) \in S$ such that $\sum_{v \in p} f_v(0) < T$ for all $p \in \mathcal{P}_i$. Let $\mathcal{F} = \cup^k_{i=1} \mathcal{P}_i$, we call a path $p \in \mathcal{F}$ a \textit{feasible path} and $\mathcal{F}$ is a set of all feasible paths in $G$. A non-feasible path either connects no pair in $S$ or has an initial length exceed $T$.  $\cspi$ is formally defined as follows:

\begin{mydef} \textnormal{Length-bounded Paths Interdiction in Continuous Domain ($\cspi$).} Given an undirected graph $G=(V,E)$, a set $f = \{f_v: \mathbb{R}^\geq \rightarrow \mathbb{R}^\geq \}$ of node weight functions w.r.t impact on nodes and a target set of pairs of nodes $S = \{(s_1,t_1),...(s_k,t_k)\}$, determine an impact vector $\mathbf{x}$ with a minimum $\norm{\mathbf{x}}$ such that $d_\mathbf{x}(s_i,t_i) \geq T$ for all $(s_i, t_i) \in S$.
\end{mydef}





Let's look at several mathematical operators on vector space $\mathbb{R}^n$, which are used along the theoretical proofs of our algorithms. Given $\mathbf{x} = \{x_1,...x_n\}, \mathbf{y} = \{y_1,...y_n\} \in \mathbb{R}^n$, define:
\begin{align*}
\mathbf{x} + \mathbf{y} & = \{x_1+y_1,...x_n+y_n\} \\
\mathbf{x} \setminus \mathbf{y} &= \{\max(x_1 - y_1,0),... \max(x_n - y_n, 0)\}
\end{align*}

Moreover, we say $\mathbf{x} \leq \mathbf{y}$ if $x_v \leq y_v$ for all $v \in V$, the similar rule is applied to $<,\geq,>$. 

\begin{theorem}
 \cspi is an NP-hard problem
\end{theorem}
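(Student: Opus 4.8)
The plan is to prove NP-hardness by a polynomial-time reduction from the classical \textsc{Vertex Cover} problem, exploiting the fact that \cspi contains a hard combinatorial special case once the node functions are specialized to threshold (step) functions. First I would fix these functions to be $0/1$-style steps: for a node $v$ on which impact ``matters'', set $f_v(x)=0$ for $0\le x<1$ and $f_v(x)=T$ for $x\ge 1$. This is monotone non-decreasing and maps $\mathbb{R}^\geq$ into $\mathbb{R}^\geq$, so it is an admissible input. Under such a function a path acquires length at least $T$ precisely when at least one of its nodes receives impact at least $1$; hence ``making all target paths long'' collapses to ``hitting every target path with an impacted node,'' i.e. a covering problem.

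Given a \textsc{Vertex Cover} instance $(H=(V_H,E_H),k)$, I would construct a \cspi instance as follows. Keep a node $v_a$ with the step function above for every vertex $a\in V_H$. For each edge $(a,b)\in E_H$ introduce two fresh terminal nodes $s_{ab},t_{ab}$, make $v_a$ the unique neighbor of $s_{ab}$ and $v_b$ the unique neighbor of $t_{ab}$, add the connection $v_a$--$v_b$, assign the constant-zero function to every terminal node, put the pair $(s_{ab},t_{ab})$ into $S$, and use any fixed $T>0$. Because each terminal has a single neighbor, every path from $s_{ab}$ to $t_{ab}$ necessarily contains both $v_a$ and $v_b$, and the direct path $s_{ab}\,v_a\,v_b\,t_{ab}$ always exists; since all $f_v(0)=0<T$, every such path is feasible in the sense of the definition.

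The correctness argument has two directions, and I would first establish the structural lemma that the optimum is attained by a binary vector supported on original nodes: impacting a terminal only increases $\norm{\mathbf{x}}$ without changing any length, any value in $(0,1)$ on an original node is wasted cost, and any value exceeding $1$ yields no extra length, so an optimal $\mathbf{x}$ may be taken with $x_v\in\{0,1\}$ on original nodes and $0$ on terminals. Granting this, a vertex cover $C$ of $H$ yields a feasible $\mathbf{x}$ (set $x_v=1$ for $v\in C$) with $\norm{\mathbf{x}}=|C|$, since every $(s_{ab},t_{ab})$-path contains $v_a$ or $v_b$; conversely any feasible binary $\mathbf{x}$ must block the direct path of each edge, forcing $v_a$ or $v_b$ to be impacted, so $\{v:x_v\ge 1\}$ is a vertex cover of the same size. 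Hence the \cspi optimum equals the minimum vertex cover, and the reduction is complete and polynomial.

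The step I expect to be the main obstacle is precisely the structural lemma ruling out cheaper continuous solutions, since the continuous domain is exactly where an adversary might hope to beat an integral cover; the $0$-until-$1$-then-$T$ design is what guarantees that no value strictly between the ``off'' and ``on'' levels can ever pay off. A secondary point needing care is verifying that the gadget introduces no unintended short $s_{ab}$--$t_{ab}$ path that would impose spurious constraints; the unique-neighbor attachment of the terminals handles this, forcing every such path through both $v_a$ and $v_b$ regardless of whether $G$ is read as directed or undirected.
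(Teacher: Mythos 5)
Your proof is correct, but it takes a genuinely different route from the paper. The paper reduces from \texttt{QoSD}, the discrete predecessor problem from prior work: it keeps $G$, $S$, $T$ unchanged, sets $f_v(x) = b_v(\lfloor x \rfloor)$, and observes that any optimal solution of the resulting \cspi instance can be taken integral, so the two optima coincide and hardness is inherited in two lines. You instead reduce from \textsc{Vertex Cover} with an explicit gadget: $0$-until-$1$-then-$T$ step functions on original vertices, constant-zero terminals attached by unique neighbors for each edge, and a rounding (structural) lemma showing the continuous optimum is attained by a binary vector, so that the \cspi optimum equals the minimum vertex cover. Both arguments are sound; all the steps in yours check out, including the rounding lemma (values in $(0,1)$ contribute nothing and can be zeroed, values above $1$ can be truncated to $1$, terminals can be zeroed) and the observation that the unique-neighbor attachment forces every $s_{ab}$--$t_{ab}$ path through both $v_a$ and $v_b$ in either the directed or undirected reading. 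What each buys: the paper's reduction is essentially free but rests entirely on the NP-hardness of \texttt{QoSD}, a nonstandard result from a cited paper; yours is self-contained, rests only on a textbook-hard problem, and establishes hardness already for a very restricted special case of \cspi (identical two-level step functions, a single fixed threshold, constant-size gadget paths), which is a slightly stronger statement about where the difficulty of the problem resides.
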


\begin{proof}
    We reduce \texttt{QoSD} to \cspi as follows: Given an instance of the \texttt{QoSD} problem, including a directed graph $G=(V,E)$ and a set of target pairs of nodes $S$, each node $v$ is associated with a monotone discrete functions $b_v: \mathbb{Z}^\geq \rightarrow \mathbb{Z}^+$. \texttt{QoSD} asks for a minimum $\norm{\mathbf{x}}$ that any path $p$ connecting a pair in $S$ has length exceeding $T$, i.e. $\sum_{v \in p} b_v(x_v) \geq T$.  
    
    We create an instance of \cspi by keeping $G,S,T$ and defining the node weight functions $\{f_v\}_v$ in continuous domain by letting $f_v(x) = b_v(\floor{x})$ for $x \in \mathbb{R}^\geq$ and $\forall v \in V$. $f_v$ is monotone non-decreasing function in continuous domain.
    
    It is trivial that each entry $x_v$ of an optimal solution of this \cspi instance should be an integer (or else we can replace $x_v$ by $\floor{x_v}$ and the \cspi's objective still go through). Since the optimal solution of \cspi contains all integers, it is also an optimal solution of \texttt{QoSD}. And vice versa, an optimal solution of \texttt{QoSD} is also an optimal solution of this \cspi instance.  Thus, \cspi is at least as hard as \texttt{QoSD}. And since \texttt{QoSD} has been proven to be NP-hard, \cspi is an NP-hard problem. 
\end{proof}

\subsection{General model of our solutions}


{\bf Properties of Performance Guarantees.} Given a problem instance with a threshold $T$, denote $\mathtt{OPT}$ as an optimal solution. We call an impact vector $\mathbf{x}$ is \textbf{$\varepsilon$-feasible} to \cspi iff under $\mathbf{x}$, the distance between each target pair is at least $T(1-\varepsilon)$. Our algorithms are bicriteria approximation algorithms, returning a $\varepsilon$-feasible solution $\mathbf{x}$ whose overall impact is bounded within a factor $O(\ln |\mathcal{F}|\varepsilon^{-1})$ of $\mathtt{OPT}$. $\varepsilon$ is treated as a trade-off between the algorithms' accuracy and returned $\norm{\mathbf{x}}$. To be specific, the smaller $\varepsilon$ is, the closer pairs' distances are to $T$ but the larger the returned solution is. $\varepsilon$ is adjustable, allowing users to control the trade-off as desired.  

{\bf General Framework.} Our solutions contain two separate oracles, called \textit{Threshold Blocking} (\tb) and \textit{Critical Paths Listing} (\cpl). These two oracles communicate back and forth with each other to construct a solution to \cspi, given an input instance of \cspi and a parameter $\varepsilon$. These two oracles are proposed to tackle two challenges of \cspi as stated before, to be specific:

\begin{itemize}
    \item \textit{Threshold Blocking} - a primary role of \tb is to solve a sub-problem of \cspi: Given a target set $\mathcal{P}$ of single paths and an initial impact vector $\mathbf{x}$, \tb aims to find $\mathbf{s}$ of minimum $\norm{\mathbf{s}}$ to $\mathbf{x}$ in order to make $d_{\mathbf{x} + \mathbf{s}}(p) \geq T$ for all $p \in \mathcal{P}$. For simplicity, we call this task \tb problem.
    \item \textit{Critical Paths Listing} - this oracle restricts the number of paths to be considered in the algorithm, thus significantly reducing the searching space and burdens on algorithms' runtime and memory for storage.
\end{itemize}

We propose multiple algorithms for each oracle. Specifically, we devise two algorithms for \cpl, which are \textit{Feasible Set Construction} and \textit{Incremental Interdiction}. To solve \tb, we develop two algorithms, called \textit{Threshold Expansion} and \textit{Jump Start Greedy}. Different combinations of \cpl and \tb algorithms provide different performances theoretically and experimentally. 

In general, the flow of our algorithms is:
\begin{enumerate}
    \item The algorithm starts with $x_v = 0$ for all $v \in V$.
    \item Given the current state of $\mathbf{x}$, by using a technique to restrict searching space, \cpl oracle searches for a set of critical paths $\mathcal{P}$, who are feasible paths and $d_\mathbf{x}(p) < T$ for all $p \in \mathcal{P}$.
    \item Then those paths along with a current state of $\mathbf{x}$ are given as an input for the \tb oracle, which then finds an additional budget $\mathbf{s}$ for $\mathbf{x}$ to make $d_{\mathbf{x} + \mathbf{s}}(p) \geq T$ for all $p \in \mathcal{P}$.
    \item The additional budget $\mathbf{v}$ is then used for \cpl to check the feasibility. If adding $\mathbf{s}$ makes $\mathbf{x}$  $\varepsilon$-feasible, the algorithm returns $\mathbf{x} + \mathbf{s}$ and terminates. Otherwise, $\mathbf{s}$ is used to drive the searching space of \cpl and find a new value for $\mathbf{x}$; then step (2) is repeated.
\end{enumerate}




\section{Threshold Blocking Oracle} \label{sec:tb}

In this section, we present two algorithms for Threshold Blocking (\tb) Oracle, called \textit{Threshold Expansion} (\te) and \textit{Jump Start Greedy} (\jsg). 

To recap, \tb receives a set $\mathcal{P}$ of feasible paths from \cpl, an impact vector $\mathbf{x}$. The objective of \tb is to find an additional vector $\mathbf{s} = \{s_1,...s_n\}$ with minimum $\sum_v s_v$ such that $d_{\mathbf{x} + \mathbf{s}}(p) \geq T$ for all $p \in \mathcal{P}$. 

Denote $\mathbf{s}^*$ as an optimal solution, i.e.
\begin{align*}
    \mathbf{s}^* = argmin_{\mathbf{s} : d_{\mathbf{x} + \mathbf{s}}(p) \geq T~\forall p \in \mathcal{P}} \norm{\mathbf{s}}
\end{align*}

The bicriteria guarantee of our algorithms originates from \tb's algorithms. Say in another way, instead of finding an exact solution, the desired accuracy $\varepsilon$ is given to the \tb oracle so \tb's algorithms find $\mathbf{s}$ such that $d_{\mathbf{x} + \mathbf{s}}(p) \geq T (1-\varepsilon)$ for all $p \in \mathcal{P}$. 



Denote $\langle v,x \rangle \in \mathbb{R}^n$ as a vector which receives value $x$ at entry $v$ and $0$ elsewhere.

Given a path set $P$, a vector $\mathbf{w}$ and a node $v$, let:
\begin{align*}
    r_{P,\mathbf{w},v}(x) = \sum_{p \in P} \big( d_{\mathbf{w} + \langle v,x \rangle } (p) - d_\mathbf{w}(p) \big)
\end{align*}

Intuitively, $r_{P,\mathbf{w},v}(x)$ measures the total increasing lengths, under an impact vector $\mathbf{w}$, of paths in $P$ by adding an amount $x$ to entry $v$ of $\mathbf{w}$.

\subsection{Threshold Expansion}

In general, \te works in rounds and in each round, \te set up a requirement on an amount to be added in each node. The requirements are relaxed after each round in order to allow new amount to be added; and the algorithm stops when guaranteeing the obtained solution $\mathbf{s}$ make $d_{\mathbf{x} + \mathbf{s}} (p) \geq T(1-\varepsilon)$ for all $p \in \mathcal{P}$. 


The requirement in each round of \te is in a form of a number $M$, which is initiated to be a large number. An amount $x$ to be added into $v$ guarantees $x = max\{x > 0 \mid \frac{r_{\mathcal{P}, \mathbf{x} + \mathbf{s}, v}(x)}{x} \geq M \}$. Intuitively, the condition $\frac{r_{\mathcal{P}, \mathbf{x} + \mathbf{s}, v}(x)}{x} \geq M$ is to ensure the additional amount is meaningful and significant in comparison with putting an impact on other nodes. Since there could be a wide range of $x$ that can satisfy $\frac{r_{\mathcal{P}, \mathbf{x} + \mathbf{s}, v}(x)}{x} \geq M$, the algorithm targets for the maximum $x$ because it helps the algorithm quickly reach to the feasible solution. After $x$ is added to entry $v$, the algorithm discards paths $p$ that $d_{\mathbf{x}+\mathbf{s}}(p) \geq T(1-\varepsilon)$ out of $\mathcal{P}$ since those paths have fulfilled the algorithm's target.

After considering adding impacts to all nodes with a constraint in term of $M$, the algorithm reduces the value of $M$ to be $(1-\epsilon) M$ with $\epsilon$ is a constant parameter inputted to the algorithm. The reduction in $M$ is to let new impact amounts be added into nodes. On the other hand, $\epsilon$ impacts the performance of the algorithm. Intuitively, the lower value of $\epsilon$ is, the better solution quality the algorithm can obtain but the longer running-time for the algorithm to terminate. The pseudocode of the algorithm is presented in Alg. \ref{alg:te_alg}.

\begin{algorithm}[t]
	\caption{Theshold Expansion}
	\label{alg:te_alg}
    \begin{flushleft}
    \textbf{Input} $G, \{f_v\}_v, \mathcal{P}, T, \epsilon, \varepsilon, \mathbf{x}$ \\
	\textbf{Output}: $\mathbf{s}$ that $d_{\mathbf{x} + \mathbf{s}}(p) \geq T(1-\varepsilon)$ for all $p \in \mathcal{P}$
    \end{flushleft}
    \begin{algorithmic}[1]
		\State $\mathbf{s} = \{0\}_v$
		\State $M = $ a large number. \label{line:M}
        \State $v \leftarrow $ the first node in $E$
        \While{$\mathcal{P}$ is not empty} \label{line:inner_it_greedy}
            	\State $\hat{x} = \max\big\{ x \geq 0 \mid \frac{r_{\mathcal{P}, \mathbf{x} + \mathbf{s}, v}(x)}{x} \geq M  \big\}$ \label{line:te_query}
                \State $\mathbf{s} = \mathbf{s} + \langle v, \hat{x} \rangle$
                \State Remove paths $p$ that $d_{\mathbf{x} + \mathbf{s}} (p) \geq T(1-\varepsilon)$ out of $\mathcal{P}$
            \If{$v$ is the last node in $V$}
                \State $M = (1-\epsilon) M$ \label{line:M_reduce}
                \State $v \leftarrow $ start over with the first node
            \Else
                \State $v \leftarrow$ the next node.
            \EndIf
        \EndWhile
    \end{algorithmic}
    \begin{flushleft}
    	\textbf{Return } $\mathbf{s}$
    \end{flushleft}
\end{algorithm}

\te's theoretical performance is obtained with an assumption that:  $M$ - initiated at line \ref{line:M} of Alg. \ref{alg:te_alg} - satisfies:
\begin{align}
    M \geq \frac{r_{\mathcal{P}, \mathbf{w}, v}(x)}{x} \mbox{ for all } \mathbf{w} \geq \mathbf{x}, v \in V \mbox{ and } x \geq 0 \label{equ:M_condition}
\end{align}

This assumption can be removed if $f_v$s are differentiable everywhere. In that case, we set $M$ as the following lemma.

\begin{lemma} \label{lemma:M_te} 
If $f_v$s are differentiable everywhere, by setting
\begin{align*}
 M = |\mathcal{P}| \times \max_{x \geq 0, v \in V, f_v(x) \leq T} \frac{\partial f_v}{ \partial x}
\end{align*}
the condition \ref{equ:M_condition} is satisfied.
\end{lemma}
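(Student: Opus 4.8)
The plan is to bound the ratio $r_{\mathcal{P},\mathbf{w},v}(x)/x$ one path at a time and then sum over $\mathcal{P}$. First I would observe that $r_{\mathcal{P},\mathbf{w},v}(x)=\sum_{p\in\mathcal{P}}\big(d_{\mathbf{w}+\langle v,x\rangle}(p)-d_{\mathbf{w}}(p)\big)$ splits into independent per-path increments, and that any path $p$ with $v\notin p$ contributes zero, since adding impact at $v$ does not change its length. Hence it suffices to bound the increment of a single path $p$ containing $v$ by $Dx$, where $D:=\max_{x\ge 0,\,v\in V,\,f_v(x)\le T}\partial f_v/\partial x$; summing over the at most $|\mathcal{P}|$ contributing paths then gives $r_{\mathcal{P},\mathbf{w},v}(x)\le |\mathcal{P}|\,D\,x = Mx$, which is exactly condition \eqref{equ:M_condition}. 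I would also note that the hypothesis $\mathbf{w}\ge\mathbf{x}$ is never used: monotonicity and non-negativity of the $f_v$ suffice, so the bound in fact holds for every $\mathbf{w}\ge 0$.

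For a fixed path $p\ni v$, I would write $L=\sum_{u\in p}f_u(w_u)$ for its raw (un-capped) length and $\Delta=f_v(w_v+x)-f_v(w_v)\ge 0$ for the raw increase caused by the extra impact (non-negativity of $\Delta$ follows from monotonicity of $f_v$). Since $d_{\mathbf{w}}(p)=\min(L,T)$, the per-path increment equals $\min(L+\Delta,T)-\min(L,T)$, which is nonzero only when $L<T$ and in that case equals $\min(\Delta,\,T-L)$. The structural fact I would repeatedly exploit is that $f_v(w_v)\le L$, because $L$ is a sum of non-negative weights one of which is $f_v(w_v)$.

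The crux is to convert $\min(\Delta,\,T-L)$ into a derivative bound through the Mean Value Theorem while staying inside the region $\{f_v\le T\}$ on which $D$ dominates the derivative. Writing $g(y)=f_v(w_v+y)-f_v(w_v)$ — continuous (differentiability implies continuity), non-decreasing, with $g(0)=0$ and $g(x)=\Delta$ — I would split into two cases. If $\Delta\le T-L$, then $f_v(w_v+x)=f_v(w_v)+\Delta\le L+\Delta\le T$, so $f_v\le T$ on all of $[w_v,w_v+x]$; the MVT gives $\Delta=g(x)=f_v'(\xi)\,x\le Dx$ for some $\xi$ in that segment, whose derivative is bounded by $D$ because $f_v(\xi)\le f_v(w_v+x)\le T$. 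If instead $\Delta> T-L$, the Intermediate Value Theorem supplies $x^\ast\in(0,x)$ with $g(x^\ast)=T-L$, and the same argument on $[w_v,w_v+x^\ast]$ (where again $f_v\le f_v(w_v+x^\ast)=f_v(w_v)+(T-L)\le T$) yields $T-L=g(x^\ast)\le Dx^\ast\le Dx$. In either case the increment is at most $Dx$, which closes the per-path estimate.

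I expect the main obstacle to be precisely this second case, where the path length crosses the threshold $T$ partway through the increment: there the capped increment is only $T-L$ rather than the full $\Delta$, so one cannot simply bound $\Delta\le Dx$ over the whole interval $[w_v,w_v+x]$, since the derivative may legitimately exceed $D$ once $f_v>T$. Introducing the crossing point $x^\ast$ and restricting the MVT to $[w_v,w_v+x^\ast]$ is what keeps the argument inside the admissible region of the maximum defining $D$, and invoking $f_v(w_v)\le L$ to certify $f_v(w_v+x^\ast)\le T$ is the delicate step. A minor side point is that $D$ should be assumed finite (a supremum of $\partial f_v/\partial x$ over the interval $\{x:f_v(x)\le T\}$), which is implicit in the differentiability-everywhere hypothesis.
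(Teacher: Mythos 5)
Your proof is correct and takes essentially the same route as the paper's: decompose $r_{\mathcal{P},\mathbf{w},v}(x)$ into per-path increments (only paths containing $v$ contribute), bound each increment by $Dx$ with $D=\max_{x \geq 0, v \in V, f_v(x)\leq T}\partial f_v/\partial x$, and sum over the at most $|\mathcal{P}|$ contributing paths. The only difference is rigor: your two-case MVT/IVT treatment of the $\min(\cdot,T)$ cap is precisely the justification for the paper's unproved middle inequality, which bounds each per-path increment by $\max_{x:\,f_v(x)\leq T}\bigl(f_v(x)-f_v(w_v)\bigr)/(x-w_v)$ before passing to the derivative.
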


\begin{proof}
We have:
\begin{align*}
    \frac{r_{\mathcal{P}, \mathbf{w}, v}(x)}{x} & = \sum_{p \in \mathcal{P}: v \in p} \frac{d_{\mathbf{w} + \langle v,x \rangle}(p) - d_\mathbf{w}(p)}{x} \\
    &\leq |\mathcal{P}| \max_{p \in \mathcal{P}: v \in p} \frac{d_{\mathbf{w} + \langle v,x \rangle}(p) - d_\mathbf{w}(p)}{x} \\
    &\leq |\mathcal{P}| \max_{x : f_v(x) \leq T} \frac{f_v(x) - f_v(w_e)}{x - w_e} \\
    &\leq |\mathcal{P}| \times \max_{x \geq 0, v \in V, f_v(x) \leq T} \frac{\partial f_v}{ \partial x}
\end{align*}
which completes the proof.
\end{proof}

 From now on, for simplicity, when we analyze the performance of \te at an iteration of the \textbf{while} loop of line \ref{line:inner_it_greedy} Alg. \ref{alg:te_alg}, we refer $M$, $\mathbf{s}$, $\mathcal{P}$ as their values at that iteration. 

Let's consider at an iteration of line \ref{line:inner_it_greedy} Alg. \ref{alg:te_alg}, denote $\mathbf{s}^o = \{s_v^o\}_v = \mathbf{s}^* \setminus \mathbf{s}$. We have the following lemma.

\begin{lemma} \label{lemma:te_no_concave}
$s_v^o = 0$ or $\frac{r_{\mathcal{P}, \mathbf{x} + \mathbf{s}, v} (s_v^o)}{s_v^o} < \frac{M}{1-\epsilon}$ for all $v \in V$.
\end{lemma}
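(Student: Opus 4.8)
The plan is to fix a node $v$ with $s_v^o>0$ and trace back to the most recent iteration, at or before the current one, in which $v$ was the active node of Alg.~\ref{alg:te_alg} (line~\ref{line:te_query}). I would record the state at that \emph{reference} iteration: the threshold $M_v$, the base vector $\mathbf{w}_v=\mathbf{x}+\mathbf{s}_v$ just before the update, the path set $\mathcal{P}_v$, and the amount $\hat{x}_v$ that was added there. Several monotonicity facts drive the argument, and I would set them down first. Since $\mathbf{s}$ is only ever increased, $\mathbf{s}_v\le\mathbf{s}$ and hence $\mathbf{w}_v\le\mathbf{w}:=\mathbf{x}+\mathbf{s}$; since the $v$-entry of $\mathbf{s}$ moves only when $v$ is active, it has not changed since the reference iteration, so $w_v=(\mathbf{w}_v)_v+\hat{x}_v$; since paths are only removed from $\mathcal{P}$, we have $\mathcal{P}\subseteq\mathcal{P}_v$; and since $M$ is multiplied by $(1-\epsilon)$ at most once per full pass (line~\ref{line:M_reduce}), $M_v$ equals either the current $M$ or $M/(1-\epsilon)$, so in all cases $M_v\le M/(1-\epsilon)$. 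The only node lacking a reference iteration is one not yet touched during the first pass; for such a $v$ the current $M$ is still the initial value, and condition~\eqref{equ:M_condition} gives $r_{\mathcal{P},\mathbf{w},v}(s_v^o)/s_v^o\le M<M/(1-\epsilon)$ directly, settling that case.

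The crux is a diminishing-returns statement in the \emph{base point} of $r$: for any $a\ge 0$,
\begin{align}
r_{\mathcal{P}_v,\mathbf{w},v}(a)\;\le\; r_{\mathcal{P}_v,\mathbf{w}_v,v}(\hat{x}_v+a)-r_{\mathcal{P}_v,\mathbf{w}_v,v}(\hat{x}_v). \label{eq:te_dimret}
\end{align}
I would prove this per path and sum, using that $r$ is additive over $\mathcal{P}_v$ and that a path not containing $v$ contributes $0$ to both sides. For a single path $p\ni v$, writing $A=\sum_{u\in p,\,u\ne v}f_u(w_u)$ and $A'=\sum_{u\in p,\,u\ne v}f_u((\mathbf{w}_v)_u)$ (so $A\ge A'$ because $\mathbf{w}\ge\mathbf{w}_v$), and using $w_v=(\mathbf{w}_v)_v+\hat{x}_v$, both sides reduce to increments of the capped function $t\mapsto\min(\,\cdot\,+f_v(t),T)$. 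The inequality then follows from the elementary fact that $h(A)=\min(A+C,T)-\min(A+B,T)$ is non-increasing in $A$ whenever $C\ge B$: its right derivative is $0$ while $A+C<T$, equals $-1$ on $A+B<T\le A+C$, and is $0$ once $A+B\ge T$. I expect verifying this capped-increment monotonicity, and checking that the telescoping right-hand side of \eqref{eq:te_dimret} is exactly the reference marginal restricted to $[\hat{x}_v,\hat{x}_v+a]$, to be the main technical obstacle; the remainder is bookkeeping.

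With \eqref{eq:te_dimret} in hand I would chain three inequalities. First, $r_{\mathcal{P},\mathbf{w},v}(s_v^o)\le r_{\mathcal{P}_v,\mathbf{w},v}(s_v^o)$, because $\mathcal{P}\subseteq\mathcal{P}_v$ and every summand $d_{\mathbf{w}+\langle v,x\rangle}(p)-d_{\mathbf{w}}(p)$ is nonnegative (adding impact can only increase a capped length). Second, \eqref{eq:te_dimret} with $a=s_v^o$ bounds this by $\rho(\hat{x}_v+s_v^o)-\rho(\hat{x}_v)$, where $\rho(x):=r_{\mathcal{P}_v,\mathbf{w}_v,v}(x)$. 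Third, I invoke the maximality defining $\hat{x}_v$ at threshold $M_v$ (line~\ref{line:te_query}): for every $x>\hat{x}_v$ one has $\rho(x)<M_v\,x$, while $\rho(\hat{x}_v)\ge M_v\,\hat{x}_v$ (trivially when $\hat{x}_v=0$, and because $\hat{x}_v$ attains the maximum otherwise). Since $s_v^o>0$, applying the strict bound at $x=\hat{x}_v+s_v^o$ and subtracting the second fact gives
\begin{align*}
\rho(\hat{x}_v+s_v^o)-\rho(\hat{x}_v)\;<\;M_v(\hat{x}_v+s_v^o)-M_v\,\hat{x}_v\;=\;M_v\,s_v^o.
\end{align*}
Combining the three steps and dividing by $s_v^o>0$ yields $r_{\mathcal{P},\mathbf{w},v}(s_v^o)/s_v^o<M_v\le M/(1-\epsilon)$, which is precisely the claimed bound; the strictness is inherited from the maximality step.
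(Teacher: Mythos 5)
Your proposal is correct and is essentially the paper's own argument in contrapositive form: the paper assumes $\frac{r_{\mathcal{P},\mathbf{x}+\mathbf{s},v}(s_v^o)}{s_v^o}\geq\frac{M}{1-\epsilon}$, traces back to the last iteration in which $v$ was observed, applies the same diminishing-returns and additivity facts to show $r_{\mathcal{P},\mathbf{x}+\mathbf{s}^\prime,v}(\hat{x}+s_v^o)\geq M^\prime(\hat{x}+s_v^o)$, and contradicts the maximality of $\hat{x}$, which is exactly your chain of inequalities run in reverse. Your write-up adds detail the paper leaves implicit (the per-path capped-increment proof of the diminishing-returns inequality, and the careful distinction between $\mathcal{P}$ and $\mathcal{P}_v$), but the decomposition and key steps coincide.
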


\begin{proof}
This lemma is trivial at the time each node being first observed because of the condition \ref{equ:M_condition}. Therefore, we only consider at an arbitrary moment after $M$ has been reduced by line \ref{line:M_reduce}.

Assume there exists a node $v$ such that $s_v^o > 0$ and $\frac{r_{\mathcal{P}, \mathbf{x}+\mathbf{s}, v}(s_v^o)}{s^o_v} \geq \frac{M}{1-\epsilon}$. Consider the last time $v$ is observed and $\hat{x} = \max\big\{ x \geq 0 \mid \frac{r_{\mathcal{P}, \mathbf{x}+\mathbf{s}^\prime, v} (x)}{x} \geq M^\prime \big\}$ where $\mathbf{s}^\prime$ is $\mathbf{s}$ before adding $\hat{x}$ into $v$; $M^\prime = M$ if $v$ was last observed in the current round, otherwise $M^\prime = \frac{M}{1-\epsilon}$.


We have $\mathbf{s} \geq \mathbf{s}^\prime + \langle v, \hat{x} \rangle$ but $\mathbf{s}$ and $\mathbf{s}^\prime + \langle v, \hat{x} \rangle$ have the same value at entry $v$, thus for any $p \in \mathcal{P}$ that contains $v$:
\begin{align*}
    & d_{\mathbf{x} + \mathbf{s}^\prime + \langle v, \hat{x} + s^0_v\rangle}(p) - d_{\mathbf{x} + \mathbf{s}^\prime + \langle v, \hat{x}\rangle}(p) \\
    & \quad \geq d_{ \mathbf{x} + \mathbf{s} + \langle v, s^0_v\rangle}(p) - d_{\mathbf{x} + \mathbf{s}}(p)
\end{align*}

Therefore, 
\begin{align*}
    r_{\mathcal{P}, \mathbf{x} + \mathbf{s}^\prime + \langle v,\hat{x} \rangle, v}(s_v^o) &\geq r_{\mathcal{P}, \mathbf{x} + \mathbf{s}, v}(s_v^o) \\
    & \geq s_v^o \frac{M}{1-\epsilon}
\end{align*}
So:
\begin{align*}
    r_{\mathcal{P}, \mathbf{x} + \mathbf{s}^{\prime}, v}(\hat{x} + s_v^o) & = r_{\mathcal{P}, \mathbf{x} + \mathbf{s}^{\prime}, v}(\hat{x}) + r_{\mathcal{P}, \mathbf{x} + \mathbf{s}^{\prime} + \langle v,\hat{x} \rangle, v}(s_v^o) \\
    & \geq M^\prime (\hat{x} + s_v^o)  
\end{align*}

Then an amount of at least $\hat{x} + s_v^o$ should be added into $v$, which contradicts to assumption that $\hat{x}$ is the selected amount. 
\end{proof}

Lemma \ref{lemma:te_no_concave} allows us to bound the performance guarantee of \te, which is shown in the following theorem.

\begin{theorem} \label{theorem:te_approx}
    Given $G, \{f_v\}_v, \mathcal{P}, T, \varepsilon, \mathbf{x}$, if $\mathbf{s}$ is the additional impact vector returned by \te and $\mathbf{s}^*$ is the optimal vector to make $d_{\mathbf{x} + \mathbf{s}^*}(p) \geq T$ for all $p \in \mathcal{P}$, then:
    \begin{align*}
        \norm{\mathbf{s}} \leq \frac{\ln\big( |\mathcal{P}| \varepsilon^{-1} \big) + 1}{1 - \epsilon} \norm{\mathbf{s}^*}
    \end{align*}
\end{theorem}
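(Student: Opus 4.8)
The plan is to run a greedy set-cover style analysis using the potential
$$\Phi = \sum_{p \in \mathcal{P}} \big(T - d_{\mathbf{x}+\mathbf{s}}(p)\big),$$
the total deficit, measured against the full threshold $T$, of the paths still remaining in $\mathcal{P}$ at a given iteration. Initially $\mathbf{s}=\{0\}_v$ and each of the $|\mathcal{P}|$ paths contributes at most $T$, so $\Phi_0 \le |\mathcal{P}|\,T$; the algorithm halts exactly when $\mathcal{P}$ is empty, i.e. when $\Phi = 0$.

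First I would turn Lemma \ref{lemma:te_no_concave} into a lower bound on the current threshold $M$ in terms of $\Phi$. Since $s^o_v = \max(s^*_v - s_v, 0)$ we have $\mathbf{s} + \mathbf{s}^o \ge \mathbf{s}^*$ componentwise, so by monotonicity of the $f_v$ the vector $\mathbf{x}+\mathbf{s}+\mathbf{s}^o$ makes every remaining path reach length $T$. The crux is the inequality $\Phi \le \sum_{v \in V} r_{\mathcal{P}, \mathbf{x}+\mathbf{s}, v}(s^o_v)$, which I would prove path by path: for a fixed $p$ with current length $A_p = d_{\mathbf{x}+\mathbf{s}}(p) < T$, the per-node increments $\delta_v = f_v((\mathbf{x}+\mathbf{s})_v + s^o_v) - f_v((\mathbf{x}+\mathbf{s})_v)$ satisfy $\sum_{v\in p}\delta_v \ge T - A_p$, and a short case split (whether some single $\delta_v$ already exceeds $T - A_p$, or all are smaller) shows $\sum_{v \in p} \min(\delta_v, T-A_p) \ge T - A_p$, which is exactly the contribution of $p$ to the right-hand side. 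Summing over $p$ and then applying Lemma \ref{lemma:te_no_concave} (each term is $0$ or bounded by $\tfrac{M}{1-\epsilon}s^o_v$) gives
$$\Phi \le \tfrac{M}{1-\epsilon}\sum_v s^o_v \le \tfrac{M}{1-\epsilon}\norm{\mathbf{s}^*}, \qquad\text{i.e.}\qquad M \ge \frac{(1-\epsilon)\Phi}{\norm{\mathbf{s}^*}}.$$

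Next I would combine this with the per-step progress. When the algorithm adds $\hat{x}$ to a node $v$ at threshold $M$, the selection rule gives $r_{\mathcal{P},\mathbf{x}+\mathbf{s},v}(\hat{x}) \ge M\hat{x}$, and since removing satisfied paths only decreases $\Phi$ further, $\Phi_{\mathrm{new}} \le \Phi_{\mathrm{old}} - M\hat{x} \le \Phi_{\mathrm{old}}\big(1 - \tfrac{(1-\epsilon)\hat{x}}{\norm{\mathbf{s}^*}}\big)$, using the $M$-bound just derived. Chaining this across every increment before the final one and using $1-b \le e^{-b}$ yields $\Phi_{L-1} \le \Phi_0\,\exp\!\big(-\tfrac{(1-\epsilon)}{\norm{\mathbf{s}^*}}\norm{\mathbf{s}'}\big)$, where $\norm{\mathbf{s}'}$ is the budget spent before the last increment. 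Because at least one path still remains just before the last step, its deficit to $T$ exceeds $T\varepsilon$, so $\Phi_{L-1} > T\varepsilon$; plugging $\Phi_0 \le |\mathcal{P}|T$ into the chain and taking logarithms gives $\norm{\mathbf{s}'} \le \tfrac{\ln(|\mathcal{P}|\varepsilon^{-1})}{1-\epsilon}\norm{\mathbf{s}^*}$.

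Finally I would account for the last increment $\hat{x}_L$ separately, which is where the additive $+1$ comes from: its progress cannot exceed the remaining deficit, $M\,\hat{x}_L \le \Phi_{L-1}$, while the same $M$-bound gives $M \ge \tfrac{(1-\epsilon)\Phi_{L-1}}{\norm{\mathbf{s}^*}}$, so $\hat{x}_L \le \tfrac{1}{1-\epsilon}\norm{\mathbf{s}^*}$. Adding this to the bound on $\norm{\mathbf{s}'}$ produces the claimed $\tfrac{\ln(|\mathcal{P}|\varepsilon^{-1})+1}{1-\epsilon}\norm{\mathbf{s}^*}$. I expect the main obstacle to be the deficit inequality $\Phi \le \sum_v r_{\mathcal{P},\mathbf{x}+\mathbf{s},v}(s^o_v)$: one must handle the $\min(\cdot,T)$ cap carefully (both inside $r$ and inside $d_{\mathbf{x}+\mathbf{s}}$) and verify the case analysis for every remaining path, and one must organize the exponential-decay step so that no factor turns negative — the degenerate case $\tfrac{(1-\epsilon)\hat{x}}{\norm{\mathbf{s}^*}}\ge 1$ coincides with early termination and is absorbed into the isolated last-step bound.
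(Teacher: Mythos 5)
Your proof is correct and takes essentially the same route as the paper's: the same deficit potential, the same per-step multiplicative decrease obtained by combining Lemma \ref{lemma:te_no_concave} with the greedy selection rule, the same $T\varepsilon$ lower bound on the remaining deficit, and the same isolation of the last increment to produce the additive $+1$. The only differences are in bookkeeping: you establish $\sum_{p \in \mathcal{P}}\big(T - d_{\mathbf{x}+\mathbf{s}}(p)\big) \le \sum_{v} r_{\mathcal{P},\mathbf{x}+\mathbf{s},v}(s_v^o)$ by a per-path case split on the $\min(\cdot,T)$ cap where the paper telescopes through the intermediate vectors $\mathbf{h}_u$, and you chain the decay factors with $1-b \le e^{-b}$ (correctly flagging the degenerate negative-factor case) where the paper uses an AM--GM step.
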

\begin{proof}
Let us consider at an arbitrary iteration of the \textbf{while} loop at line \ref{line:inner_it_greedy},  node $v$ is being observed, and $\hat{x}$ is a selected amount to add into $v$ but has not been added to $\mathbf{s}$ yet. Again, denote $\mathbf{s}^o = \{s_v^o\}_v = \mathbf{s}^* \setminus \mathbf{s}$. Without lost of generality, let $\hat{x} >0$. From lemma. \ref{lemma:te_no_concave}, we have:
\begin{align*}
    \frac{r_{\mathcal{P}, \mathbf{x} + \mathbf{s}, v}(\hat{x})}{\hat{x}} \geq (1-\epsilon) \frac{r_{\mathcal{P}, \mathbf{x} + \mathbf{s}, v}(s_u^o)}{s_u^o}
\end{align*}
for all $u \in V$ that $s_u^o > 0$.

Denote $\mathbf{h}_u = \{s_w + \mathbf{1}_{w > u} s_w^o\}_w$ for all $u \in V$. As $\mathbf{h}_u \geq \mathbf{s}$ but they have the same value at entry $u$, we have:
\begin{align*}
    r_{\mathcal{P}, \mathbf{x} + \mathbf{h}_u, u}(s_u^o) \leq r_{\mathcal{P}, \mathbf{x} + \mathbf{s}, u}(s_u^o)
\end{align*}
Therefore,
\begin{align*}
    & \sum_{p \in \mathcal{P}} \Big( d_{\mathbf{x} + \mathbf{s} + \mathbf{s}^o}(p) - d_{\mathbf{x} + \mathbf{s}}(p) \Big) = \sum_{u \in V} r_{\mathcal{P}, \mathbf{x} + \mathbf{h}_u, u}(s_u^o) \\
    & \quad \leq \sum_{u \in V} r_{\mathcal{P}, \mathbf{x} + \mathbf{s}, u}(s_u^o) \leq \sum_{u \in V} \frac{s_u^o}{\hat{x} (1-\epsilon)} r_{\mathcal{P}, \mathbf{x} + \mathbf{s}, v} (\hat{x}) \\
    & \quad \leq \frac{\norm{\mathbf{s}^*}}{\hat{x}(1-\epsilon)} r_{\mathcal{P}, \mathbf{x}+\mathbf{s}, v}(\hat{x})
\end{align*}

Since $\mathbf{s} + \mathbf{s}^o \geq \mathbf{s}^*$, $d_{\mathbf{x} + \mathbf{s} + \mathbf{s}^o}(p) = T$ for all $p \in \mathcal{P}$.

Now, let's assume the algorithm terminates after adding impact amounts into nodes $L$ times, denote $\hat{x}_1, ... \hat{x}_L$ as an added amount at each times ($\norm{\mathbf{s}} = \sum_{t=1}^L \hat{x}_t$). Also, denote $\mathbf{s}_t$, $\mathcal{P}_t$ as $\mathbf{s}$, $\mathcal{P}$ before adding $\hat{x}_t$ at time $t$. We have:
\begin{align*}
    & \sum_{p \in \mathcal{P}_t} \Big( T - d_{\mathbf{x} + \mathbf{s}_t}(p) \Big) \\
    & \quad \leq \frac{\norm{\mathbf{s}^*}}{\hat{x}_t(1-\epsilon)} \sum_{p \in \mathcal{P}_t} \Big( d_{\mathbf{x} + \mathbf{s}_{t+1}}(p) - d_{\mathbf{x} + \mathbf{s}_t}(p) \Big)
\end{align*}
A simple transformation and the fact that $\mathcal{P}_{t+1} \subseteq \mathcal{P}_t$ gives us:
\begin{align*}
    &\sum_{p \in \mathcal{P}_{t+1}} \Big( T - d_{\mathbf{x} + \mathbf{s}_{t+1}}(p) \Big) \\
    & \quad \leq \Big( 1 - \frac{\hat{x}_t(1-\epsilon)}{\norm{\mathbf{s}^*}} \Big) \sum_{p \in \mathcal{P}_t} \Big( T - d_{\mathbf{x} + \mathbf{s}_t}(p) \Big)
\end{align*}
Therefore,
\begin{align*}
    & \sum_{p \in \mathcal{P}_{L-1}} \Big( T - d_{\mathbf{x} + \mathbf{s}_{L-1}}(p) \Big) \\
    & \quad \leq \sum_{p \in \mathcal{P}_0} \Big( T - d_{\mathbf{x}}(p) \Big)  \prod_{t=1}^{L-1} \Big( 1 - \frac{\hat{x}_t(1-\epsilon)}{\norm{\mathbf{s}^*}} \Big) \\
    & \quad \leq |\mathcal{P}| T \Big( 1 - \frac{\sum_t^{L-1} \hat{x}_t (1-\epsilon)}{\norm{\mathbf{s}^*}(L-1)} \Big)^{L-1} \\
    & \quad \leq e^{-\frac{\norm{\mathbf{s}_{L-1}} (1-\epsilon)}{\norm{\mathbf{s}^*}}} |\mathcal{P}| T
\end{align*}


On the other hand, $\sum_{p \in \mathcal{P}_{L-1}} \Big( T - d_{\mathbf{x} + \mathbf{s}_{L-1}}(p) \Big) \geq T \varepsilon$ since $d_{\mathbf{x} + \mathbf{s}_{L-1}}(p) < T(1-\varepsilon)$ for all $p \in \mathcal{P}_{L-1}$; and $\mathcal{P}_{L-1} \neq \emptyset$. Therefore:
\begin{align*}
    \norm{\mathbf{s}_{L-1}} \leq \norm{\mathbf{s}^*} \frac{\ln\Big( |\mathcal{P}| \varepsilon^{-1}\Big)}{1-\epsilon}
\end{align*}

Now, let consider the final update, we have:
\begin{align*}
    \hat{x}_L \leq \frac{\norm{\mathbf{s}^*}}{1-\epsilon} \frac{\sum_{p \in \mathcal{P}_{L-1}} \Big( d_{\mathbf{x} + \mathbf{s}_{L}}(p) - d_{\mathbf{x} + \mathbf{s}_{L-1}}(p) \Big)}{\sum_{p \in \mathcal{P}_{L-1}} \Big( T - d_{\mathbf{x} + \mathbf{s}_{L-1}}(p) \Big)}  \leq \frac{\norm{\mathbf{s}^*}}{1-\epsilon}
\end{align*}
Finally, we have
\begin{align*}
    \norm{\mathbf{s}} = \norm{\mathbf{s}_{L-1}} + \hat{x}_L \leq \norm{\mathbf{s}^*} \frac{\ln\Big( |\mathcal{P}| \varepsilon^{-1}\Big) + 1}{1-\epsilon}
\end{align*}
which completes the proof.
\end{proof}

\subsection{Jump Start Greedy}


In general, \jsg works in a greedy manner that iteratively adds an  impact amount to a node which maximizes  $\frac{r_{\mathcal{P}, \mathbf{x}+\mathbf{s}, v}(x)}{x}$. The problem is that there exists cases due to traits of the functions $f_v$s, the selected budget is $0$ and the algorithm falls into infinite loops. We call such situation ``zero trap". \jsg overcomes that challenge by introducing \textbf{Jump Start} step to escape the zero trap while keeping a reasonable theoretical performance guarantee. 


\jsg runs in multiple iterations and for each iteration:
\begin{itemize}
    \item \textbf{Step (1)}, for each node $v$, the algorithm finds a budget $\hat{x}_v$ that maximizes $\frac{r_{\mathcal{P}, \mathbf{x} + \mathbf{s}, v}(\hat{x}_v)}{\hat{x}_v}$. If $\hat{x}_v = 0$ (which typically happens when $f_v$ is concave), we do the \textit{jump start} by forcing the minimum amount added to $v$ has to be at least a value of $\beta = O(\norm{\mathbf{s}^*} / n)$ (how we obtain the value of $\beta$ will be described later). In that case, $\hat{x}_e = argmax_{x \geq \beta} \Big\{ \frac{r_{\mathcal{P}, \mathbf{x}+\mathbf{s}, v}(x)}{x} \Big\}$.
    \item \textbf{Step (2)}, the algorithm selects a node $v$ that maximizes $\frac{r_{\mathcal{P}, \mathbf{x}+\mathbf{s}, v}(\hat{x}_v)}{\hat{x}_v}$ and add $\hat{x}_v$ into $v$. The algorithm repeats to \textbf{step (1)} until $d_{\mathbf{x} + \mathbf{s}}(p) \geq T(1-\varepsilon)$ for all $p \in \mathcal{P}$.
\end{itemize}


The pseudo-code of \jsg is presented in Alg. \ref{alg:jsg_alg} and \jsg's performance guarantee is stated in the following theorem.

\begin{algorithm}[t]
	\caption{Jump Start Greedy}
	\label{alg:jsg_alg}
    \begin{flushleft}
    \textbf{Input} $G, \{f_v\}_v, \mathcal{P}, T, \varepsilon, \mathbf{x}$ \\
	\textbf{Output}: $\mathbf{s}$ that $d_{\mathbf{x} + \mathbf{s}}(p) \geq T(1-\varepsilon)$ for all $p \in \mathcal{P}$
    \end{flushleft}
    \begin{algorithmic}[1]
		\State $\mathbf{s} = \{0\}_v$
		\State $\beta = O(\norm{\mathbf{s}^*} / n) $ 
		\While{$\mathcal{P}$ is not empty} \label{line:while_jsg}
		    \For{each $v \in V$}
		        \State $\hat{x}_v = \max_x \frac{r_{\mathcal{P}, \mathbf{x} + \mathbf{s}, v}(x)}{x}$ \label{line:parallel_jsg}
		        \If{$\hat{x}_v = 0$ (Jump Start)}
		            \State $\hat{x}_v = \max_{x \geq \beta} \frac{r_{\mathcal{P}, \mathbf{x} + \mathbf{s}, v}(x)}{x}$
		        \EndIf
		    \EndFor
		    \State $v = argmax_{v \in V} \frac{r_{\mathcal{P}, \mathbf{x} + \mathbf{s}, v}(\hat{x}_v)}{\hat{x}_v}$
		    \State $\mathbf{s} = \mathbf{s} + \langle v, x_v \rangle$
		     \State Remove paths $p$ that $d_{\mathbf{x} + \mathbf{s}} (p) \geq T(1-\varepsilon)$ out of $\mathcal{P}$
		\EndWhile
    \end{algorithmic}
    \begin{flushleft}
    	\textbf{Return } $\mathbf{s}$
    \end{flushleft}
\end{algorithm}

\begin{theorem} \label{theorem:jsg_approx}
    Given $G, \{f_v\}_v, \mathcal{P}, T, \varepsilon, \mathbf{x}$ given to the \tb oracle. If $\mathbf{v}$ is the impact vector returned by \jsg and $\mathbf{v}^*$ is the optimal vector make $d_{\mathbf{x} + \mathbf{s}^*}(p) \geq T$ for all $p \in \mathcal{P}$, then
    \begin{align*}
        \norm{\mathbf{v}} \leq O\Big( \ln\big( |\mathcal{P}| \varepsilon^{-1} \big)\Big) \norm{\mathbf{v}^*}
    \end{align*}
\end{theorem}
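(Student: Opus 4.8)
The plan is to reproduce the set-cover-style potential argument used for \te in Theorem \ref{theorem:te_approx}, but to replace the $M$-threshold bookkeeping by a direct comparison of the greedy ratio against the optimal ``average rate,'' and to confine the entire effect of the \textit{jump start} to a single constant-factor loss that is absorbed by the $O(\cdot)$. Throughout, I fix an iteration of the \textbf{while} loop at line \ref{line:while_jsg}, write $\mathbf{s}$ for the current additional vector, $\mathcal{P}$ for the current path set, and $\mathbf{s}^o=\{s^o_u\}_u=\mathbf{s}^*\setminus\mathbf{s}$. Let $v$ be the node chosen in Step (2) with amount $\hat{x}_v$, and let $\rho=\frac{r_{\mathcal{P},\mathbf{x}+\mathbf{s},v}(\hat{x}_v)}{\hat{x}_v}$ denote the greedy ratio, which by the selection rule equals $\max_{u\in V}\rho_u$ where $\rho_u$ is the (possibly jump-started) ratio computed for $u$.

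First I would establish the covering identity exactly as in \te. Since $\mathbf{s}+\mathbf{s}^o\geq\mathbf{s}^*$ forces $d_{\mathbf{x}+\mathbf{s}+\mathbf{s}^o}(p)=T$ for every $p\in\mathcal{P}$, and since the $\min(\cdot,T)$ cap makes the marginal gain of a single coordinate diminish as the base vector grows, the telescoping with $\mathbf{h}_u=\{s_w+\mathbf{1}_{w>u}s^o_w\}_w$ yields
\begin{align*}
\sum_{p\in\mathcal{P}}\big(T-d_{\mathbf{x}+\mathbf{s}}(p)\big)=\sum_{u\in V}r_{\mathcal{P},\mathbf{x}+\mathbf{h}_u,u}(s^o_u)\leq\sum_{u\in V}r_{\mathcal{P},\mathbf{x}+\mathbf{s},u}(s^o_u).
\end{align*}
The crux is then to bound the right-hand side by $O(1)\cdot\norm{\mathbf{s}^*}\,\rho$, and this is the step I expect to be the main obstacle, since greedy cannot match the optimal on the concave coordinates where the jump start is triggered. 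I split the nodes with $s^o_u>0$ into two groups. For a node $u$ whose unconstrained maximizer at line \ref{line:parallel_jsg} is positive, or whose $s^o_u\geq\beta$, the value $s^o_u$ is \emph{feasible} for the optimization defining $\hat{x}_u$, so $\frac{r_{\mathcal{P},\mathbf{x}+\mathbf{s},u}(s^o_u)}{s^o_u}\leq\rho_u\leq\rho$ and hence $r_{\mathcal{P},\mathbf{x}+\mathbf{s},u}(s^o_u)\leq s^o_u\,\rho$; summed, these contribute at most $\norm{\mathbf{s}^*}\rho$.

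The delicate group consists of the \emph{zero-trap} nodes $u$ (jump-started) with $s^o_u<\beta$, for which $s^o_u$ is infeasible. Here I would not invoke concavity at all, but only monotonicity of $r$ in $x$ together with the jump-start maximization: since $\rho_u=\max_{x\geq\beta}\frac{r_{\mathcal{P},\mathbf{x}+\mathbf{s},u}(x)}{x}\geq\frac{r_{\mathcal{P},\mathbf{x}+\mathbf{s},u}(\beta)}{\beta}$ and $s^o_u<\beta$ gives $r_{\mathcal{P},\mathbf{x}+\mathbf{s},u}(s^o_u)\leq r_{\mathcal{P},\mathbf{x}+\mathbf{s},u}(\beta)$, we obtain $r_{\mathcal{P},\mathbf{x}+\mathbf{s},u}(s^o_u)\leq\beta\,\rho_u\leq\beta\,\rho$. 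There are at most $n$ such nodes, so with $\beta=O(\norm{\mathbf{s}^*}/n)$ their total is $O(n\beta)\rho=O(\norm{\mathbf{s}^*})\rho$. Combining the two groups gives $\sum_{p\in\mathcal{P}}\big(T-d_{\mathbf{x}+\mathbf{s}}(p)\big)\leq c\,\norm{\mathbf{s}^*}\,\rho$ for an absolute constant $c$; equivalently, the greedy step certifies $\rho\geq\frac{1}{c\norm{\mathbf{s}^*}}\sum_{p\in\mathcal{P}}\big(T-d_{\mathbf{x}+\mathbf{s}}(p)\big)$. This is exactly the quantitative greedy guarantee that the value of $\beta$ is engineered to preserve.

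From here the argument is the standard contraction, identical in form to \te. Writing $\hat{x}_t,\mathbf{s}_t,\mathcal{P}_t$ for the quantities at the $t$-th update and using $r_{\mathcal{P}_t,\mathbf{x}+\mathbf{s}_t,v}(\hat{x}_t)=\hat{x}_t\,\rho$ together with $\mathcal{P}_{t+1}\subseteq\mathcal{P}_t$, I get
\begin{align*}
\sum_{p\in\mathcal{P}_{t+1}}\big(T-d_{\mathbf{x}+\mathbf{s}_{t+1}}(p)\big)\leq\Big(1-\tfrac{\hat{x}_t}{c\norm{\mathbf{s}^*}}\Big)\sum_{p\in\mathcal{P}_t}\big(T-d_{\mathbf{x}+\mathbf{s}_t}(p)\big).
\end{align*}
Unrolling the product, bounding $1-z\leq e^{-z}$ and the initial deficit by $|\mathcal{P}|T$, and using the stopping condition (the residual deficit stays above $T\varepsilon$ until the last step) gives $\norm{\mathbf{s}_{L-1}}\leq c\,\norm{\mathbf{s}^*}\ln\!\big(|\mathcal{P}|\varepsilon^{-1}\big)$. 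The final amount is controlled by $\hat{x}_L\,\rho=r(\hat{x}_L)\leq\sum_{p\in\mathcal{P}_{L-1}}\big(T-d_{\mathbf{x}+\mathbf{s}_{L-1}}(p)\big)\leq c\norm{\mathbf{s}^*}\rho$, so $\hat{x}_L\leq c\norm{\mathbf{s}^*}$, and adding the two pieces yields $\norm{\mathbf{v}}=\norm{\mathbf{s}}\leq c\,\norm{\mathbf{s}^*}\big(\ln(|\mathcal{P}|\varepsilon^{-1})+1\big)=O\big(\ln(|\mathcal{P}|\varepsilon^{-1})\big)\norm{\mathbf{v}^*}$. The one caveat I would flag is that $\beta$ depends on the unknown $\norm{\mathbf{s}^*}$; the analysis assumes the value supplied by the deferred guessing/doubling procedure, whose overhead is swallowed by the constant $c$.
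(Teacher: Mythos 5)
Your proposal is correct and follows essentially the same route as the paper's proof: the same telescoping identity via $\mathbf{h}_u=\{s_w+\mathbf{1}_{w>u}s_w^o\}_w$, the same use of monotonicity plus greedy domination over the feasible region $x\geq\beta$ to obtain the bound $\sum_u r_{\mathcal{P},\mathbf{x}+\mathbf{s},u}(s_u^o)\leq(\norm{\mathbf{s}^*}+n\beta)\rho$, and the same contraction/unrolling argument, with $\beta=O(\norm{\mathbf{s}^*}/n)$ absorbing the jump-start loss. Your two-group case split (feasible $s^o_u$ versus zero-trap nodes with $s^o_u<\beta$) is only a cosmetic refinement of the paper's uniform shift $r_u(s^o_u)\leq r_u(s^o_u+\beta)\leq(s^o_u+\beta)\rho$, and your explicit treatment of the final step $\hat{x}_L$ and of the unknown $\norm{\mathbf{s}^*}$ in $\beta$ matches what the paper handles via its \te argument and Lemma \ref{lemma:jsg_lower_opt}.
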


\begin{proof}
Let's consider at a certain iteration of \textbf{while} loop (line \ref{line:while_jsg} Alg. \ref{alg:jsg_alg}), $\mathbf{s}$ is now under construction (not returned solution) and $\mathcal{P}$ is not empty. Again, denote $\mathbf{s}^o = \{s_v^o\}_v = \mathbf{s}^* \setminus \mathbf{s}$ and $\mathbf{h}_u = \{s_w + \mathbf{1}_{w > u} s_w^o\}_w$. From the proof of Theorem \ref{theorem:te_approx}, we have that $r_{\mathcal{P}, \mathbf{x} + \mathbf{h}_u, u}(s_u^o) \leq r_{\mathcal{P}, \mathbf{x} + \mathbf{s}, u}(s_u^o)$ and:
\begin{align*}
   \sum_{p \in \mathcal{P}} \Big( d_{\mathbf{x} + \mathbf{s} + \mathbf{s}^o}(p) - d_{\mathbf{x} + \mathbf{s}}(p) \Big) \leq \sum_{u \in V} r_{\mathcal{P}, \mathbf{x} + \mathbf{s}, u}(s_u^o)
\end{align*}

Due to monotonicity, $r_{\mathcal{P}, \mathbf{x} + \mathbf{s}, u}(s_u^o) \leq r_{\mathcal{P}, \mathbf{x} + \mathbf{s}, u}(s_u^o + \beta)$. We observe that: Even a node $v$ was forced to take jump start step or not, the selected amount $\hat{x}_v$ always satisfies $\frac{r_{\mathcal{P}, \mathbf{x} + \mathbf{s}, v}(\hat{x}_v)}{\hat{x}_v} \geq \frac{r_{\mathcal{P}, \mathbf{x} + \mathbf{s}, v}(x)}{x}$ for all $x \geq \beta$. Thus, let's assume $v$ is the selected node in this \textbf{while} iteration with the increasing impact amount of $\hat{x}_v$. Due to greedy selection, we have:
\begin{align*}
    \sum_{u \in V} r_{\mathcal{P}, \mathbf{x} + \mathbf{s}, u}(s_u^o) & \leq \sum_{u \in V} \frac{s_v^o + \beta}{\hat{x}_v} r_{\mathcal{P}, \mathbf{x} + \mathbf{s}, v}(s_v^o) \\
    & \leq \frac{\norm{\mathbf{s}^*} + \beta n}{\hat{x}_v} r_{\mathcal{P}, \mathbf{x} + \mathbf{s}, v}(s_v^o)
\end{align*}

Now, let's assume the algorithm terminates after adding impact amounts into nodes $L$ times, denote $\hat{x}_1, ... \hat{x}_L$ as an added amount at each times ($\norm{\mathbf{s}} = \sum_{t=1}^L \hat{x}_t$). Also, denote $\mathbf{s}_t$, $\mathcal{P}_t$ as $\mathbf{s}$, $\mathcal{P}$ before adding $\hat{x}_t$ at time $t$. Using the same transformation as in proof of \te, we obtain the resursion relationship between $\mathbf{s}_t, \mathcal{P}_t$ as follows:
\begin{align*}
    &\sum_{p \in \mathcal{P}_{t+1}} \Big( T - d_{\mathbf{x} + \mathbf{s}_{t+1}}(p) \Big) \\
    & \quad\leq \Big( 1 - \frac{\hat{x}_t}{\norm{\mathbf{s}^*} + \beta n} \Big) \sum_{p \in \mathcal{P}_t} \Big( T - d_{\mathbf{x} + \mathbf{s}_t}(p) \Big)
\end{align*}
Using the same technique as in \te to discarding the terms from round $t=1$ to $L-1$, we have
\begin{align*}
    \norm{\mathbf{s}} \leq \big( \norm{\mathbf{s}^*} + \beta n \big) O(\ln |\mathcal{P}|\varepsilon^{-1})
\end{align*}
The theorem follows given the fact that $\beta = O(\norm{\mathbf{s}^*} / n)$.
\end{proof}

Now the only question left is how to identify $\beta = O(\norm{\mathbf{s}^*} / n)$. The trivial answer is $\beta = 0$ but that does not help on the jump start step. To find a more reasonable lower bound of the optimal solution $\norm{\mathbf{s}^*}$, we have the following lemma.

\begin{lemma} \label{lemma:jsg_lower_opt}
Given a impact vector $\mathbf{x}$ such that there exists $p \in \mathcal{P}$, $d_\mathbf{x}(p) < T$, there exist $\sigma > 0$ such that with $\mathbf{w}(\sigma) = \{\sigma\}_v$, $d_{\mathbf{x} + \mathbf{w}(\sigma)}(p) < T$ and $\norm{\mathbf{s}^*} \geq \sigma$
\end{lemma}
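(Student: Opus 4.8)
The plan is to decouple the two requirements. The bound $\norm{\mathbf{s}^*} \geq \sigma$ will follow almost for free from monotonicity once $\sigma$ is chosen so that $d_{\mathbf{x} + \mathbf{w}(\sigma)}(p) < T$; the real content is exhibiting a strictly positive $\sigma$ with that property. So first I would fix the path $p$ supplied by the hypothesis and record that $d_\mathbf{x}(p) < T$ forces $\sum_{v \in p} f_v(x_v) < T$ (the bounding $\min$ with $T$ is inactive). I then introduce the one-dimensional profile $g(\sigma) = \sum_{v \in p} f_v(x_v + \sigma)$, which, as a sum of monotone non-decreasing functions, is itself monotone non-decreasing with $g(0) < T$. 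Note that only the nodes on $p$ matter for $d(p)$, even though $\mathbf{w}(\sigma)$ raises every coordinate by $\sigma$.

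For the lower bound, suppose $\sigma$ satisfies $g(\sigma) < T$ and argue by contradiction. If $\norm{\mathbf{s}^*} < \sigma$, then since every coordinate of $\mathbf{s}^*$ is non-negative we get $s^*_v \leq \norm{\mathbf{s}^*} < \sigma$ for each $v$, whence monotonicity of $f_v$ gives $f_v(x_v + s^*_v) \leq f_v(x_v + \sigma)$ for all $v \in p$. Summing over $p$ yields $\sum_{v \in p} f_v(x_v + s^*_v) \leq g(\sigma) < T$, contradicting feasibility of $\mathbf{s}^*$, which forces $d_{\mathbf{x} + \mathbf{s}^*}(p) \geq T$, i.e. $\sum_{v \in p} f_v(x_v + s^*_v) \geq T$. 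Hence $\norm{\mathbf{s}^*} \geq \sigma$. This step needs nothing beyond non-negativity of the entries and monotonicity.

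The crux is therefore the existence of $\sigma > 0$ with $g(\sigma) < T$. Set $\sigma_0 = \max_{v \in p} s^*_v$; feasibility already rules out $\sigma_0 = 0$ (otherwise $g(0) \geq T$), so $\sigma_0 > 0$, and since $\sigma_0 \geq s^*_v$ for every $v \in p$ we have $g(\sigma_0) \geq \sum_{v \in p} f_v(x_v + s^*_v) \geq T$. Thus the non-decreasing profile $g$ starts strictly below $T$ at $0$ and reaches at least $T$ by $\sigma_0$, so it crosses $T$ somewhere in $(0, \sigma_0]$. Invoking the (right-)continuity of the $f_v$ — the same regularity under which the $\max$ and $\arg\max$ queries driving \te and \jsg are well defined — the set $\{\sigma \geq 0 : g(\sigma) < T\}$ is an interval containing $0$ and extending strictly to the right, so any interior point works, and one can binary-search within $[0, \sigma_0]$ for the largest admissible $\sigma$. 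This existence claim is the main obstacle: for a purely monotone $f_v$ with a right-jump at some $x_v$ the profile could leap to or past $T$ immediately, so the argument genuinely relies on the continuity the model implicitly assumes. Finally, combining $0 < \sigma \leq \sigma_0 \leq \norm{\mathbf{s}^*}$ shows $\sigma$ is a valid positive lower bound on $\norm{\mathbf{s}^*}$, exactly what \jsg needs to set $\beta = \sigma/n = O(\norm{\mathbf{s}^*}/n)$ while escaping the zero trap.
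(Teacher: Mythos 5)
Your proof is correct, and for the lower-bound half it is essentially the paper's own argument: the paper also deduces $\norm{\mathbf{s}^*} \geq \sigma$ from $d_{\mathbf{x}+\mathbf{s}^*}(p) \geq T > d_{\mathbf{x}+\mathbf{w}(\sigma)}(p)$ by noting that otherwise every entry of $\mathbf{s}^*$ would be dominated by $\sigma$ and monotonicity would be violated; your contrapositive via $s^*_v \leq \norm{\mathbf{s}^*} < \sigma$ is the same reasoning in different packaging. Where you genuinely go beyond the paper is the existence half. The paper dismisses the first statement (``there exists $\sigma>0$ with $d_{\mathbf{x}+\mathbf{w}(\sigma)}(p)<T$'') as trivial and offers no proof, whereas you actually establish it: you set $\sigma_0 = \max_{v\in p} s^*_v$, rule out $\sigma_0=0$ by feasibility, show the profile $g(\sigma)=\sum_{v\in p} f_v(x_v+\sigma)$ satisfies $g(0)<T\leq g(\sigma_0)$, and then invoke right-continuity of the $f_v$ at the current impact levels to obtain a strictly positive admissible $\sigma$. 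Your remark that this step genuinely needs continuity is correct and exposes a real imprecision in the paper: under the stated hypotheses alone (monotone non-decreasing $f_v$, no continuity), the lemma can fail --- e.g., for a path whose only weight-carrying node $v$ has $f_v(0)=0$ and $f_v(x)=T$ for all $x>0$, one has $d_\mathbf{x}(p)=0<T$ yet $d_{\mathbf{x}+\mathbf{w}(\sigma)}(p)=T$ for every $\sigma>0$, so no valid $\sigma$ exists. So your treatment is strictly more complete: it proves what the paper asserts, and it isolates the regularity assumption (right-continuity at the points $x_v$) that both the lemma and the subsequent binary search for $\sigma$ implicitly rely on.
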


\begin{proof}
 The first statement is trivial, so we will focus on the second statement. We have $d_{\mathbf{x} + \mathbf{s}^*}(p) > d_{\mathbf{x} + \mathbf{w}(\sigma)}(p)$. Thus there should exist at least one entry in $\mathbf{s}^*$ that is at least $\sigma$. So $\norm{\mathbf{s}^*} \geq \sigma$, which completes the proof.
\end{proof}




As $d_{\mathbf{x} + \mathbf{w}(\sigma)}(p)$ is monotone increasing w.r.t $\sigma$, we use binary search to find $\sigma$ and set $\beta = \frac{\sigma}{n}$.

\section{Critical Path Listing Oracle} \label{sec:cpl}

In this section, we present two algorithms for the \cpl oracle, which are \textit{Incremental Interdiction} (\ii) and \textit{Feasible Set Interdiction} (\fsi). \cpl's role is to reduce searching space when constructing the returned solution $\mathbf{x}$. \cpl works as a \textbf{backbone} for the overall process of finding $\mathbf{x}$, in which it receives \cspi's input, then communicates back and forth with \tb to construct $\mathbf{x}$ and returns $\mathbf{x}$ when $\mathbf{x}$ guarantees  $d_\mathbf{x}(s,t) \geq T(1 - \varepsilon)$ for all $(s,t) \in S$.

\subsection{Incremental Interdiction}

In general, this algorithm works in rounds; and in each rounds, impact amounts are added into nodes to guarantee a set of feasible paths getting length exceeding $T(1-\varepsilon)$. A set of paths are different and disjoint in each round. And to make all paths of that set have length exceed $T(1-\varepsilon)$, \ii calls the \tb oracle to find an additional impact vector to its current vector $\mathbf{x}$. The algorithm iterates until finding no feasible paths of length less than $T(1-\varepsilon)$. 

A set of paths in each round contains $k$ shortest paths connecting each pair of $S$ under its current impact vector $\mathbf{x}$. $k$ is a constant parameter inputted for the algorithm. Intuitively, $k$ is desired to be neither too large or too small. Large $k$ bring burdens on running time to find those shortest paths and memory to store them. On the other hand, small $k$ does not bring sufficient exposures for critical nodes, who appear frequently on paths connecting pairs in $S$ and are the ones the algorithm should target to put impact on. The pseudocode is presented in Alg. \ref{alg:incremental}. 



\begin{algorithm}[t]
	\caption{Incremental Interdiction}
    \label{alg:incremental}
	\begin{flushleft}
	\textbf{Input} $G, \{f_e\}_e, T, \varepsilon, S, \tb$ \\
	\textbf{Output} $\mathbf{x}$
	\end{flushleft}
    \begin{algorithmic}[1]
    	\State $\mathbf{x} = \{0\}$
		\While{$ \exists (s,t) \in S$ that $d_\mathbf{x}(s,t) < T(1 - \varepsilon)$} \label{line:ii_outer_it} 
        	\State $\mathcal{P} = \emptyset$
        	\For{each pair $(s,t) \in S$ that $d_\mathbf{x}(s,t) < T (1- \varepsilon)$}
        	\State $K = k$ shortest paths from $s$ to $t$ under $\mathbf{x}$
        	\State Remove paths $p$ that $d_\mathbf{x}(p) \geq T(1-\varepsilon)$ out of $K$
        	   \State $\mathcal{P} = \mathcal{P} \cup K$ 
        	\EndFor
            \State $\mathbf{s} = $ run \tb oracle with input $G,\{f_v\}_v,\mathcal{P},T, \varepsilon, \mathbf{x}$ 
            \State $\mathbf{x} = \mathbf{x} + \mathbf{s}$
        \EndWhile
     \end{algorithmic}
     \begin{flushleft}
     	\textbf{Return $\mathbf{x}$}
     \end{flushleft}
\end{algorithm}

Denote $t$ as the number of outer rounds (line \ref{line:ii_outer_it} Alg. \ref{alg:incremental}) \ii ran before terminating. \ii's theoretical performance guarantee is stated in the following theorem.

\begin{theorem}
 Given an instance $G, \{f_v\}_v, S, T$ of the \cspi problem and a \tb oracle, if $\mathbf{x}$ is an output of \ii and $\mathbf{x}^*$ is the optimal solution to the \cspi's instance, then
 \begin{align*}
     \norm{\mathbf{x}} \leq \norm{\mathbf{x}^*} O(t \ln \frac{|\mathcal{F}| \varepsilon^{-1}}{t})
 \end{align*}
\end{theorem}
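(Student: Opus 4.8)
The plan is to telescope the cost of \ii over its $t$ outer rounds and charge each round's increment against the global optimum $\mathbf{x}^*$. Index the rounds $i=1,\dots,t$, let $\mathbf{x}_i$ denote the impact vector at the start of round $i$ (so $\mathbf{x}_1=\mathbf{0}$), let $\mathcal{P}_i$ be the path set assembled in that round, and let $\mathbf{s}_i$ be the increment returned by the \tb oracle, so that $\mathbf{x}_{i+1}=\mathbf{x}_i+\mathbf{s}_i$ and the output satisfies $\norm{\mathbf{x}}=\sum_{i=1}^t \norm{\mathbf{s}_i}$. For each round I would invoke the \tb guarantee (Theorem~\ref{theorem:te_approx} or Theorem~\ref{theorem:jsg_approx}, depending on which oracle is plugged in), which gives
\begin{align*}
\norm{\mathbf{s}_i}\leq O\big(\ln(|\mathcal{P}_i|\varepsilon^{-1})\big)\,\norm{\mathbf{s}_i^*},
\end{align*}
where $\mathbf{s}_i^*$ is the optimal additive vector making every path of $\mathcal{P}_i$ reach length $T$ from $\mathbf{x}_i$.

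The first key step is to replace each per-round optimum $\norm{\mathbf{s}_i^*}$ by the single global quantity $\norm{\mathbf{x}^*}$. Since $\mathbf{x}^*$ is feasible for the full \cspi instance, $d_{\mathbf{x}^*}(p)\geq T$ for every feasible path, in particular for every $p\in\mathcal{P}_i\subseteq\mathcal{F}$. Because $\mathbf{x}_i\geq\mathbf{0}$, monotonicity of the $f_v$ gives $d_{\mathbf{x}_i+\mathbf{x}^*}(p)\geq d_{\mathbf{x}^*}(p)\geq T$, so $\mathbf{x}^*$ is itself a feasible increment for the round-$i$ \tb problem, and hence $\norm{\mathbf{s}_i^*}\leq\norm{\mathbf{x}^*}$. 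Summing over rounds yields $\norm{\mathbf{x}}\leq\norm{\mathbf{x}^*}\sum_{i=1}^t O\big(\ln(|\mathcal{P}_i|\varepsilon^{-1})\big)$.

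The second key step is to control $\sum_{i=1}^t \ln(|\mathcal{P}_i|\varepsilon^{-1})$. I would first argue the path sets are pairwise disjoint: once round $i$ ends, \tb has pushed every $p\in\mathcal{P}_i$ to $d_{\mathbf{x}_{i+1}}(p)\geq T(1-\varepsilon)$, and monotonicity keeps $d_{\mathbf{x}_j}(p)\geq T(1-\varepsilon)$ for all later $j>i$; such a path is stripped out by the explicit removal step, so it never reappears. Moreover each $\mathcal{P}_i\subseteq\mathcal{F}$ (the removal step guarantees $\sum_{v\in p} f_v(0)\leq\sum_{v\in p} f_v(x_v)<T$ for surviving paths), so $\sum_{i=1}^t |\mathcal{P}_i|\leq|\mathcal{F}|$. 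Concavity of the logarithm (Jensen's inequality) then yields
\begin{align*}
\sum_{i=1}^t \ln(|\mathcal{P}_i|\varepsilon^{-1})\leq t\,\ln\!\Big(\frac{\varepsilon^{-1}}{t}\sum_{i=1}^t |\mathcal{P}_i|\Big)\leq t\,\ln\frac{|\mathcal{F}|\varepsilon^{-1}}{t},
\end{align*}
which combined with the previous bound gives the claimed $\norm{\mathbf{x}}\leq\norm{\mathbf{x}^*}\,O\!\big(t\ln\frac{|\mathcal{F}|\varepsilon^{-1}}{t}\big)$.

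I expect the main obstacle to be the disjointness argument together with its consequence $\sum_i |\mathcal{P}_i|\leq|\mathcal{F}|$: everything else is either a direct appeal to the already-proved \tb theorems or a one-line monotonicity/feasibility observation. Disjointness is what turns a naive $t\ln(k|S|\varepsilon^{-1})$-type bound into the sharper $t\ln(|\mathcal{F}|\varepsilon^{-1}/t)$ form, and it is precisely here that the removal step in Alg.~\ref{alg:incremental} and the monotone non-decreasing assumption on the $f_v$ are both essential; I would want to verify carefully that no path can be regenerated in a later round even though the $k$-shortest-path set is recomputed under the updated $\mathbf{x}$ each time.
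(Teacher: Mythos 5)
Your proposal is correct and follows essentially the same route as the paper's proof: telescope the per-round \tb{} guarantees, charge each round against $\norm{\mathbf{x}^*}$, use disjointness of the $\mathcal{P}_i$ to get $\sum_i |\mathcal{P}_i| \leq |\mathcal{F}|$, and apply AM-GM (your Jensen step is the same inequality) to reach the $t \ln\frac{|\mathcal{F}|\varepsilon^{-1}}{t}$ form. In fact you make explicit two steps the paper leaves implicit --- that $\mathbf{x}^*$ remains a feasible increment at each round by monotonicity (so the per-round optimum is at most $\norm{\mathbf{x}^*}$), and the removal-step argument for why no path recurs across rounds --- which strengthens rather than changes the argument.
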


\begin{proof}
    Denote $\mathbf{x}_i$ and $\mathcal{P}_i$ as $\mathbf{x}$ and $\mathcal{P}$ obtained at iteration $i$ of the loop at line \ref{line:ii_outer_it}. From approximation guarantee of the \tb oracle, we have:
    \begin{align*}
     \norm{\mathbf{x}_i \setminus \mathbf{x}_{i-1}} \leq \norm{\mathbf{x}^*} O(\ln(|\mathcal{P}_i|\varepsilon^{-1}))
    \end{align*}
    Therefore:
    \begin{align}
        \norm{\mathbf{x}} & = \sum_{i=1}^t \norm{\mathbf{x}_i \setminus \mathbf{x}_{i-1}} \leq \norm{\mathbf{x}^*}{} \sum_{i=1}^t O(\ln(|\mathcal{P}_i|\varepsilon^{-1})) \\
        & = \norm{\mathbf{x}^*} O(\ln \prod_{i=1}^t |\mathcal{P}_i| + t \ln  \varepsilon^{-1}) \\
        & \leq \norm{\mathbf{x}^*} O(\ln \Big( \frac{\sum_{i=1}^t |\mathcal{P}_i|}{t} \Big)^t) + O(t \ln \varepsilon^{-1}) \label{equ:ii_am_gm}\\
        & \leq \norm{\mathbf{x}^*} O(t \ln \frac{|\mathcal{F}|\varepsilon^-1}{t}) \label{equ:ii_all}
    \end{align}
    The inequality \ref{equ:ii_am_gm} is from AM-GM inequality while \ref{equ:ii_all} is from the fact that $L_i$s are disjoint sets of paths. Thus $\sum_{i=1}^t |\mathcal{P}_i| \leq |\mathcal{F}|$, which completes the proof.
\end{proof}




\subsection{Full Set Interdiction} \label{sec:igta}

In general, \fsi aims to construct a set $\mathcal{P}$ of feasible paths, which is a subset of $\mathcal{F}$ but, if being used as an input for \tb, can return $\mathbf{s}$ that is also an $\varepsilon$-feasible solution of \cspi. 

Different to \ii, which incrementally adds impact to interdict disjoint sets of feasible paths, \fsi aggregates all found path sets into a big one set called $\mathcal{P}$; and reset the impact vector $\mathbf{x}$ in order to find a new vector that can simultaneously interdict all paths in $\mathcal{P}$. A new path set is found by $k$ shortest paths algorithm with a same motive as \ii. The algorithm terminates when the output $\mathbf{s}$ of the \tb oracle with input $\mathcal{P}$ is also $\varepsilon$-feasible to \cspi. The pseudocode is presented in Alg. \ref{alg:iterative} and \fsi's performance guarantee is presented by the following theorem.

\begin{theorem} \label{theorem:ic_approx}
 Given an instance $G, \{f_v\}_v, S, T$ of the \cspi problem and a \tb oracle, if $\mathbf{x}$ is an output of \fsi and $\mathbf{x}^*$ is the optimal solution to the \cspi's instance, then
 \begin{align*}
     \norm{\mathbf{x}} \leq \norm{\mathbf{x}^*} O(\ln |\mathcal{F}| \varepsilon^{-1})
 \end{align*}
\end{theorem}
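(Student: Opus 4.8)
The plan is to exploit the structural difference between \fsi and \ii: because \fsi resets the impact vector before each call to \tb and only the \emph{final} such call produces the returned $\mathbf{x}$, the approximation ratio is governed by a single invocation of \tb on the terminal aggregated path set, rather than by a telescoping sum over rounds as in \ii. Accordingly, I would let $\mathcal{P}$ denote the path set held by \fsi at termination, and observe from the construction that $\mathcal{P} \subseteq \mathcal{F}$ (every path collected by the $k$-shortest-paths step is a feasible path) and that the output $\mathbf{x}$ is exactly the vector returned by \tb when run on $\mathcal{P}$ from the zero initial vector.

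First I would apply the \tb guarantee (Theorem~\ref{theorem:te_approx} for \te, or Theorem~\ref{theorem:jsg_approx} for \jsg, both of which yield the same asymptotic factor) to this final call. Writing $\mathbf{s}^*_{\mathcal{P}}$ for the optimal \tb solution on $\mathcal{P}$, that is, the minimum-norm vector with $d_{\mathbf{s}^*_{\mathcal{P}}}(p) \geq T$ for all $p \in \mathcal{P}$, this step gives $\norm{\mathbf{x}} \leq O(\ln(|\mathcal{P}|\varepsilon^{-1}))\,\norm{\mathbf{s}^*_{\mathcal{P}}}$.

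The crux of the argument is then to bound $\norm{\mathbf{s}^*_{\mathcal{P}}}$ by $\norm{\mathbf{x}^*}$. The key observation is that the optimal \cspi solution $\mathbf{x}^*$ interdicts \emph{every} feasible path, not merely those in $\mathcal{P}$: since $d_{\mathbf{x}^*}(s_i,t_i) \geq T$ and $d_{\mathbf{x}^*}(s_i,t_i) = \min_{p}\, d_{\mathbf{x}^*}(p)$ over all paths $p$ joining the pair, every individual path $p \in \mathcal{F}$ satisfies $d_{\mathbf{x}^*}(p) \geq T$. In particular $\mathbf{x}^*$ is a feasible candidate for the \tb subproblem on $\mathcal{P} \subseteq \mathcal{F}$, so by optimality of $\mathbf{s}^*_{\mathcal{P}}$ we obtain $\norm{\mathbf{s}^*_{\mathcal{P}}} \leq \norm{\mathbf{x}^*}$. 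Finally, $\mathcal{P} \subseteq \mathcal{F}$ gives $|\mathcal{P}| \leq |\mathcal{F}|$, hence $\ln(|\mathcal{P}|\varepsilon^{-1}) \leq \ln(|\mathcal{F}|\varepsilon^{-1})$, and chaining the three inequalities delivers $\norm{\mathbf{x}} \leq O(\ln(|\mathcal{F}|\varepsilon^{-1}))\,\norm{\mathbf{x}^*}$.

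I expect the main obstacle to be justifying rigorously that only the terminal \tb call matters, that is, confirming that \fsi genuinely discards the intermediate impact vectors so that no round-by-round accumulation occurs, together with the feasibility-transfer step showing that $\mathbf{x}^*$ solves the \tb instance on $\mathcal{P}$. The remaining pieces, namely monotonicity of the distance, the subset bound on $|\mathcal{P}|$, and the termination condition guaranteeing $\varepsilon$-feasibility of the returned $\mathbf{x}$, are routine once these two points are secured.
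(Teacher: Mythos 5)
Your proposal is correct and follows essentially the same route as the paper's proof: both reduce the analysis to the single terminal \tb{} invocation on the aggregated path set $\mathcal{P}$ (valid because \fsi{} resets $\mathbf{x}$ to zero before each call), apply the \tb{} guarantee there, and conclude via $\mathcal{P} \subseteq \mathcal{F}$. The only difference is that you make explicit the feasibility-transfer step $\norm{\mathbf{s}^*_{\mathcal{P}}} \leq \norm{\mathbf{x}^*}$ (since $\mathbf{x}^*$ interdicts every feasible path, it is a candidate for the \tb{} subproblem), which the paper's proof uses implicitly when it writes the \tb{} bound directly in terms of $\norm{\mathbf{x}^*}$.
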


\begin{proof}
    Without loss of generality, let $\mathcal{P}$ denote as the final path sets inputted to \tb in the final iteration. From performance guarantee of \tb, we have that:
    \begin{align*}
         \norm{\mathbf{x}} \leq \norm{\mathbf{x}^*} O(\ln |\mathcal{P}| \varepsilon^{-1})
    \end{align*}
    The theorem trivially follows since there is no duplicated path in $\mathcal{P}$ and $\mathcal{P} \subseteq \mathcal{F}$.
\end{proof}

Although \fsi shows to have a better performance guarantee than \ii, in term of memory complexity, it could take \fsi $O(|\mathcal{F}|)$ to store $\mathcal{P}$ while \ii only takes $O(|S|k)$. That is the trade-off between those two algorithms and it will be shown in more detail in our experiments.

\begin{algorithm}[t]
	\caption{Full Set Interdiction}
    \label{alg:iterative}
	\begin{flushleft}
	\textbf{Input} $G, \{f_e\}_e, T, \varepsilon, S, \tb$ \\
	\textbf{Output} $\mathbf{x}$
	\end{flushleft}
    \begin{algorithmic}[1]
    	\State $\mathcal{P} = \emptyset, \mathbf{x} = \{0\}_v$
		\While{$ \exists (s,t) \in S$ that $d_\mathbf{x}(s,t) < T(1 - \varepsilon)$ in $G$} \label{line:outer_it} 
        	\For{each pair $(s,t) \in S$ that $d_\mathbf{x}(s,t) < T (1- \varepsilon)$}
        	    \State $K = k$ shortest paths from $s$ to $t$ under $\mathbf{x}$
        	    \State Remove paths $p$ that $d_\mathbf{x}(p) \geq T(1-\varepsilon)$ out of $K$
        	   \State $\mathcal{P} = \mathcal{P} \cup K$ 
        	 \EndFor
		    \State $\mathbf{x} = \{0\}$
            \State $\mathbf{s} = $ run \tb oracle with input $G,\{f_v\}_v,\mathcal{P},T, \varepsilon, \mathbf{x}$ 
            \State $\mathbf{x} = \mathbf{s}$
        \EndWhile
     \end{algorithmic}
     \begin{flushleft}
     	\textbf{Return $\mathbf{x}$}
     \end{flushleft}
\end{algorithm}

\section{Experimental Analysis} \label{sec:experiment}

In this section, we run simulation on network data sets to evaluate performance of different combination between algorithms of the \cpl and \tb oracle. We compare our algorithms' performance to several methods modified from existing solutions to adapt to the context of \cspi. The results show our algorithms outperform existing methods in most cases. We further investigate advantages of each algorithm to reveal some insights on use cases of each technique. 

\subsection{Experimental Settings}

We run experiments on a router network, collected from SNAP \cite{snapnets} dataset. The network is constructed as a communication network of who-talks-to-whom from the BGP (Border Gateway Protocol) logs. The network is undirected, containing 6474 nodes and 13895 undirected links connecting nodes.

Critical traffics are randomly sampled from pairs of end hosts in the networks. That critical traffics forms the set $S$ as an input to \cspi.

Due to lack of dataset information, for each experiment, we let $f_v$ be identical for all $v$, and be one of the following:
\begin{itemize}
    \item $f_v(x) = O(x^2)$ - a convex function in order to simulate the relation between external impacts to a router latency.
    \item  $f_v(x) = O(\log x)$ - a concave function to simulate the relation between external impacts to packet drop/loss rate of a router.
    \item $f_v(x) = O(x)$ - a linear function to compare our algorithms' solution quality to an optimal solution, which can be found by using linear programming.
    \item $f_v(x) = O(\floor{x})$ - a step function to compare our algorithms' performance with an existing discrete method.
\end{itemize}

We compare our algorithms with the following methods:
\begin{itemize}
    \item \texttt{CUT} - this method is adapted from \cite{kuhnle2018network}. In general, the method works in an ``all-or-nothing" manner that an impact amount put into a node is either $0$ or $\min\{x \mid f_v(x) \geq T\}$. That amount guarantees any path containing that node will have length at least $T$.
    \item \texttt{DISCRETE} - this method discretizes the functions $f_v$s as follows. If $f_v$ is a step function, the amount put into a node is a positive integer. Otherwise, the amount put into a node is among $0, x, 2x, 3x$ where $x = \min\{x \mid f_v(3x) \geq T\}$. The method then apply the \texttt{QoSD} algorithm to solve the discretized instance.
    \item \texttt{OPT} - this method is only applied when $f_v$ is a linear or step function. We use CPLEX \cite{cplex2009v12} to optimally solve the linear programming modelling the \tb oracle and combine it with \fsi in \cpl to obtain the optimal solution to \cspi.
\end{itemize}

With our algorithms, the most time-consuming part is on finding global optimum of univariate functions, for example $\max_x \frac{r_{\mathcal{P}, \mathbf{x} + \mathbf{s}, v}(x)}{x}$ in \jsg. As ``what is the best technique to find global optimum?" \cite{calvin2012adaptive,aaid2017new} is still an open question, we measured the runtime of our algorithms in term of how many times they have to query for finding global optimum of a univariate function. 

Finally, in the \cpl oracle, we set $k=20$, which - in our experiment - balances the trade-off on running time to find $k$ shortest path and the exposure of critical nodes. In \tb, with \te, we initially set $M = 1000000$ if the function is non-differentiable (e.g. step function). $\varepsilon = 0.1$ otherwise stated.

\textit{We only present representative experimental results. Other results with similar behaviors are excluded.}

\subsection{Results}

\subsubsection{How algorithms perform with various $T$?}

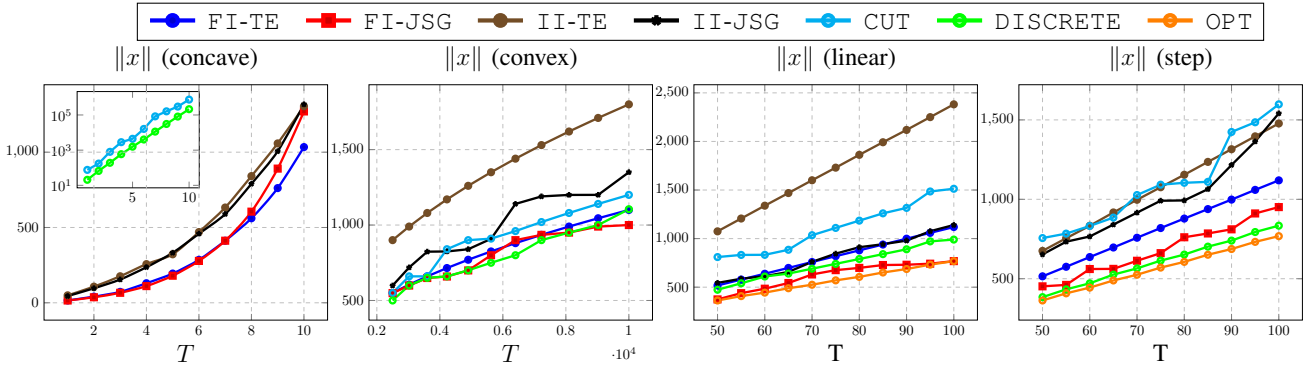
\begin{figure*}[t]
\begin{tikzpicture}[yscale=0.55, xscale=0.55]
    \begin{groupplot}[group style={group size= 4 by 1}]
        \begin{axis}[
                width=0.25\textwidth,ymode=log,
                xshift=0.8cm,
                yshift=3.2cm,
                every axis plot/.append style={ultra thick}
        ]
                \addplot[cyan,mark=o] table [x=T, y=CUT, col sep=comma] {data/concave_T/solution.csv};
                \label{plot:cut}
                \addplot[green,mark=o] table [x=T, y=DISCRETE, col sep=comma] {data/concave_T/solution.csv};
                \label{plot:discrete}
        \end{axis}
        \nextgroupplot[title={$\norm{x}$ (concave)},xlabel={$T$}, grid style=dashed, grid=both, grid style={line width=.1pt, draw=gray!10}, major grid style={line width=.2pt,draw=gray!50}, every axis plot/.append style={ultra thick, smooth},title style={font=\LARGE}, label style={font=\LARGE}]
                \addplot table [x=T, y=IC-TE, col sep=comma] {data/concave_T/solution.csv}; \label{plot:ic_te}
                \addplot table [x=T, y=IC-JSG, col sep=comma] {data/concave_T/solution.csv};
                \label{plot:ic_jsg}
                \addplot table [x=T, y=II-TE, col sep=comma] {data/concave_T/solution.csv};
                \label{plot:ii_te}
                \addplot table [x=T, y=II-JSG, col sep=comma] {data/concave_T/solution.csv};
                \label{plot:ii_jsg}
                \coordinate (top) at (rel axis cs:0,1);
        \nextgroupplot[title={$\norm{x}$ (convex)},xlabel={$T$},grid style=dashed, grid=both, grid style={line width=.1pt, draw=gray!10}, major grid style={line width=.2pt,draw=gray!50}, every axis plot/.append style={ultra thick},title style={font=\LARGE}, label style={font=\LARGE}]
                \addplot table [x=T, y=IC-TE, col sep=comma] {data/convex_T/solution.csv};
                \addplot table [x=T, y=IC-JSG, col sep=comma] {data/convex_T/solution.csv};
                \addplot table [x=T, y=II-TE, col sep=comma] {data/convex_T/solution.csv};
                \addplot table [x=T, y=II-JSG, col sep=comma] {data/convex_T/solution.csv};
                \addplot[cyan,mark=o] table [x=T, y=CUT, col sep=comma] {data/convex_T/solution.csv};
                \addplot[green,mark=o] table [x=T, y=DISCRETE, col sep=comma] {data/convex_T/solution.csv};
        \nextgroupplot[title={$\norm{x}$ (linear)},xlabel={T}, grid style=dashed, grid=both, grid style={line width=.1pt, draw=gray!10}, major grid style={line width=.2pt,draw=gray!50}, every axis plot/.append style={ultra thick},title style={font=\LARGE}, label style={font=\LARGE}]
                \addplot table [x=T, y=IC-TE, col sep=comma] {data/linear_T/solution.csv};
                \addplot table [x=T, y=IC-JSG, col sep=comma] {data/linear_T/solution.csv};
                \addplot table [x=T, y=II-TE, col sep=comma] {data/linear_T/solution.csv};
                \addplot table [x=T, y=II-JSG, col sep=comma] {data/linear_T/solution.csv};
                \addplot[cyan,mark=o] table [x=T, y=CUT, col sep=comma] {data/linear_T/solution.csv};
                \addplot[green,mark=o] table [x=T, y=DISCRETE, col sep=comma] {data/linear_T/solution.csv};
                \addplot[orange,mark=o] table [x=T, y=OPT, col sep=comma] {data/step_T/solution.csv}; \label{plot:opt}
        \nextgroupplot[title={$\norm{x}$ (step)},xlabel={T}, grid style=dashed, grid=both, grid style={line width=.1pt, draw=gray!10}, major grid style={line width=.2pt,draw=gray!50}, every axis plot/.append style={ultra thick},title style={font=\LARGE}, label style={font=\LARGE}]
                \addplot table [x=T, y=IC-TE, col sep=comma] {data/step_T/solution.csv};
                \addplot table [x=T, y=IC-JSG, col sep=comma] {data/step_T/solution.csv};
                \addplot table [x=T, y=II-TE, col sep=comma] {data/step_T/solution.csv};
                \addplot table [x=T, y=II-JSG, col sep=comma] {data/step_T/solution.csv};
                \addplot[cyan,mark=o] table [x=T, y=CUT, col sep=comma] {data/step_T/solution.csv};
                \addplot[green,mark=o] table [x=T, y=DISCRETE, col sep=comma] {data/step_T/solution.csv};
                \addplot[orange,mark=o] table [x=T, y=OPT, col sep=comma] {data/step_T/solution.csv};
                \coordinate (bot) at (rel axis cs:1,0);
    \end{groupplot}
\path (top|-current bounding box.north)--
      coordinate(legendpos)
      (bot|-current bounding box.north);
\matrix[
    matrix of nodes,
    anchor=south,
    draw,
    inner sep=0.2em,
    draw,
    nodes={anchor=center},
  ]at([xshift=-3cm]legendpos)
  {
    \ref{plot:ic_te}& \fsi-\te&[5pt]
    \ref{plot:ic_jsg}& \fsi-\jsg&[5pt]
    \ref{plot:ii_te}& \ii-\te&[5pt]
    \ref{plot:ii_jsg}& \ii-\jsg&[5pt]
    \ref{plot:cut}& \texttt{CUT}&[5pt]
    \ref{plot:discrete}& \texttt{DISCRETE} &[5pt]
    \ref{plot:opt}& \texttt{OPT} \\};
\end{tikzpicture}
\caption{Algorithms' returned solution with various $T$}
 	\label{fig:solution_T}
\end{figure*}

In the first set of experiments, we varied values of $T$ to observe how different algorithms performed. Figure \ref{fig:solution_T} displays $\norm{\mathbf{x}}$ returned by our algorithms in comparison with \texttt{CUT}, \texttt{DISCRETE} and \texttt{OPT} (only when $f_v$ is a linear or step function). 

In the concave case, we observe that our algorithms outperformed existing methods by a huge margin. Existing methods were totally undesirable in this case as their required impact were approximately 100 times worse than ours. This can be explained by: with the concave function, the contribution of impacts to the function expose diminishing return property, i.e. the function's gain becomes insignificant as input impact grows. That exposed the weakness of discretization steps in \texttt{CUT} and \texttt{DISCRETE} as a discretized impact's contribution is incomparable to the invested amount. 

On the other hand, our algorithms involving \fsi as the \tb oracle returns comparable solution quality to \texttt{OPT} and \texttt{DISCRETE} in non-concave functions. With non-concavity, the function's gain benefits when input impact increases. Critical nodes, which appear frequently on feasible paths connecting pairs in $S$, are tended to received large impact amount. Therefore, we observed \texttt{FI-JSG} and \texttt{DISCRETE} behaves almost similarly; and returns solution close to \texttt{OPT} in linear and step cases. Although our algorithms involving \ii returns solution larger than \fsi, they have advantages in running time and memory, which will be shown in the next parts. 

\subsubsection{How our algorithms' number of queries change with various $T$?}

In this experiment, we measured the number of queries each of our algorithms takes to solve a \cspi instance. Just to recall, a query is counted as a call to find global optimal of a univariate function. In algorithms involving \jsg, a query is equivalent to finding $\max_x \frac{r_{\mathcal{P}, \mathbf{x} + \mathbf{s}, v}(x)}{x}$ (line \ref{line:parallel_jsg} Alg. \ref{alg:jsg_alg}). In the ones involving \te, a query is counted as a call to find $ \max\big\{ x \geq 0 \mid \frac{r_{\mathcal{P}, \mathbf{x} + \mathbf{s}, v}(x)}{x} \geq M  \big\}$ (line \ref{line:te_query} Alg. \ref{alg:te_alg}). Figure \ref{fig:query_T} shows the numbers of queries taken by each algorithm in various $T$ and different impact functions.

From Figure \ref{fig:query_T}, we can see that our algorithm involving \ii totally outperformed the ones with \fsi in term of queries. For example, with concave cases, with a same \tb method, algorithms involving \fsi tends to take 100 times more queries than the one with \ii. With convex and step cases, this number is around 2-3 and it is around 5 in linear cases. This can be explained by the fact that \ii works in an incremental manner, in which impact amounts are accumulated when a new feasible paths - whose lengths have not satisfied the problem constraints - are found. Thus each query of algorithm involving \ii play a role, even insignificant, in constructing the final solution. Meanwhile \fsi resets its impact vector if new unsatisfactory feasible paths are found. Thus queries used before resetting the vector become wasted.

In comparison between algorithms of the \tb oracle, it can be seen that \te performed better in concave and linear cases while in convex and step, \jsg is the better one. That can be explained as follows: due to the trait of concave and linear functions, \jsg's query always returns an amount equal to the jump start step, i.e. $\beta$. Thus the algorithm required multiple queries to reach satisfactory amount. In contrast, the query $\max\big\{ x \geq 0 \mid \frac{r_{\mathcal{P}, \mathbf{x} + \mathbf{s}, v}(x)}{x} \geq M  \big\}$ of \te
 can reach to a larger amount in comparison with a jump start step. However, that situation does not happen when convexity is exposed. With convex functions, impact amounts are invested only on several nodes, which exactly is how \jsg behaves. Meanwhile, \te adds impact amounts to nodes sequentially, which makes \te's impact scattered and unnecessary on some nodes.

However, there is an interesting fact about \te:  \te's number of queries does not depends on $T$ in non-concave cases. That is the reason why \te's number of queries are constant in those cases as shown in Figure. \ref{fig:query_T}. That can be intuitively explained by that: given a set of paths $P$ which share a common node $v$, the way \te increases impact amount on $v$ by query $\max\big\{ x \geq 0 \mid \frac{r_{P, \mathbf{x} + \mathbf{s}, v}(x)}{x} \geq M  \big\}$ does not get affected by $T$'s value. 

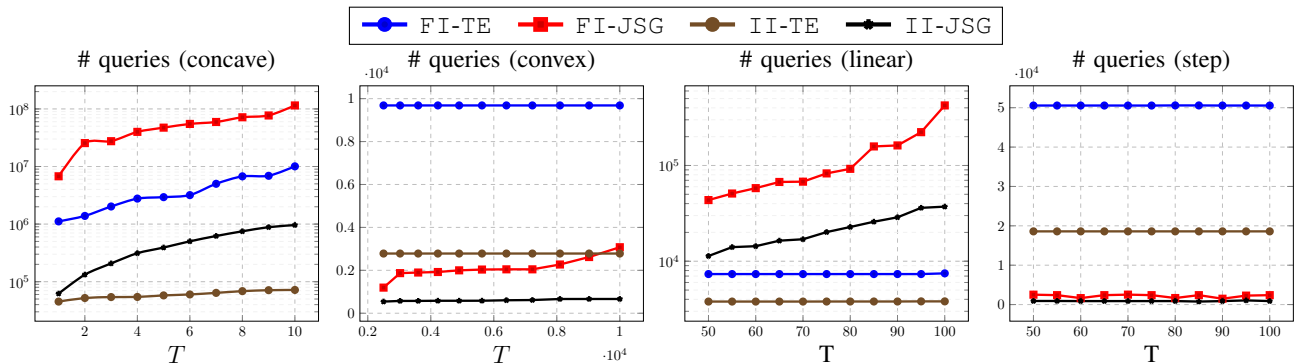
\begin{figure*}[t]
\begin{tikzpicture}[yscale=0.55, xscale=0.55]
    \begin{groupplot}[group style={group size= 4 by 1}]
        \nextgroupplot[title={\# queries (concave)},xlabel={$T$}, ymode=log, grid style=dashed, grid=both, grid style={line width=.1pt, draw=gray!10}, major grid style={line width=.2pt,draw=gray!50}, every axis plot/.append style={ultra thick, smooth},title style={font=\LARGE}, label style={font=\LARGE}]
                \addplot table [x=T, y=IC-TE, col sep=comma] {data/concave_T/query.csv};
                \addplot table [x=T, y=IC-JSG, col sep=comma] {data/concave_T/query.csv};
                \addplot table [x=T, y=II-TE, col sep=comma] {data/concave_T/query.csv};
                \addplot table [x=T, y=II-JSG, col sep=comma] {data/concave_T/query.csv};
                \coordinate (top) at (rel axis cs:0,1);
        \nextgroupplot[title={\# queries (convex)},xlabel={$T$}, grid style=dashed, grid=both, grid style={line width=.1pt, draw=gray!10}, major grid style={line width=.2pt,draw=gray!50}, every axis plot/.append style={ultra thick},title style={font=\LARGE}, label style={font=\LARGE}]
                \addplot table [x=T, y=IC-TE, col sep=comma] {data/convex_T/query.csv};
                \addplot table [x=T, y=IC-JSG, col sep=comma] {data/convex_T/query.csv};
                \addplot table [x=T, y=II-TE, col sep=comma] {data/convex_T/query.csv};
                \addplot table [x=T, y=II-JSG, col sep=comma] {data/convex_T/query.csv};
        \nextgroupplot[title={\# queries (linear)},xlabel={T}, ymode=log, grid style=dashed, grid=both, grid style={line width=.1pt, draw=gray!10}, major grid style={line width=.2pt,draw=gray!50}, every axis plot/.append style={ultra thick},title style={font=\LARGE}, label style={font=\LARGE}]
                \addplot table [x=T, y=IC-TE, col sep=comma] {data/linear_T/query.csv};
                \addplot table [x=T, y=IC-JSG, col sep=comma] {data/linear_T/query.csv};
                \addplot table [x=T, y=II-TE, col sep=comma] {data/linear_T/query.csv};
                \addplot table [x=T, y=II-JSG, col sep=comma] {data/linear_T/query.csv};
        \nextgroupplot[title={\# queries (step)},xlabel={T}, grid style=dashed, grid=both, grid style={line width=.1pt, draw=gray!10}, major grid style={line width=.2pt,draw=gray!50}, every axis plot/.append style={ultra thick},title style={font=\LARGE}, label style={font=\LARGE}]
                \addplot table [x=T, y=IC-TE, col sep=comma] {data/step_T/query.csv};
                \addplot table [x=T, y=IC-JSG, col sep=comma] {data/step_T/query.csv};
                \addplot table [x=T, y=II-TE, col sep=comma] {data/step_T/query.csv};
                \addplot table [x=T, y=II-JSG, col sep=comma] {data/step_T/query.csv};
                \coordinate (bot) at (rel axis cs:1,0);
    \end{groupplot}
\path (top|-current bounding box.north)--
      coordinate(legendpos)
      (bot|-current bounding box.north);
\matrix[
    matrix of nodes,
    anchor=south,
    draw,
    inner sep=0.2em,
    draw,
    nodes={anchor=center},
  ]at([xshift=-3cm]legendpos)
  {
    \ref{plot:ic_te}& \fsi-\te&[5pt]
    \ref{plot:ic_jsg}& \fsi-\jsg&[5pt]
    \ref{plot:ii_te}& \ii-\te&[5pt]
    \ref{plot:ii_jsg}& \ii-\jsg\\};
\end{tikzpicture}
\caption{Algorithms' number of queries with various $T$}
 	\label{fig:query_T}
\end{figure*}

\subsubsection{How the number of stored paths change?}

In the next experiment, we compare how much memory our algorithms took to process a \cspi instance. Feasible paths are critical to determine feasibility of our solution. An obstacle on preventing us to apply traditional constraint optimization on \cspi is listing all feasible paths, which could be exponential and a huge burden to computing storage. Therefore, we measures the memory efficiency of our algorithms in term of number of paths they need to store in memory in order to find a feasible solution. Figure \ref{fig:no_path} shows two kinds of charts of comparison between our algorithm: (1) One shows the maximum number of stored paths of each algorithms with various $T$; (2) The other one shows how the number of stored paths changes after each round of each algorithm. A round of my algorithm is counted as one while iteration of checking feasibility of obtained solutions.

From how \ii works, it is trivial that algorithms involving \ii store at most $O(|S|k)$ paths no matter value of $T$ is. That is also shown in Fig. \ref{fig:no_path}. On the other hand, the number of stored paths of algorithms involving \fsi increases when $T$ increases and is always much larger than this number in \ii. To have more insight, we look at how each algorithm accumulates paths after each round. As \fsi works in the manner that collects all feasible paths with unsatisfactory lengths in each round into one large set of paths, its number of paths starts from $O(|S|k)$ (the same as \ii) and increases significantly with more and more rounds to come. On the other hand, each round of \ii stores at most $O(|S|k)$ feasible paths; its path set in each round is disjoint and decreases in size. Therefore, \ii clearly shows its dominance to \fsi in term of memory.

Similar to the number of queries for finding global optimum of a univariate function, in linear cases, the number of stored paths of algorithms involving \te also stays constant and does not affected by value of $T$. The same reason is also applied.
 
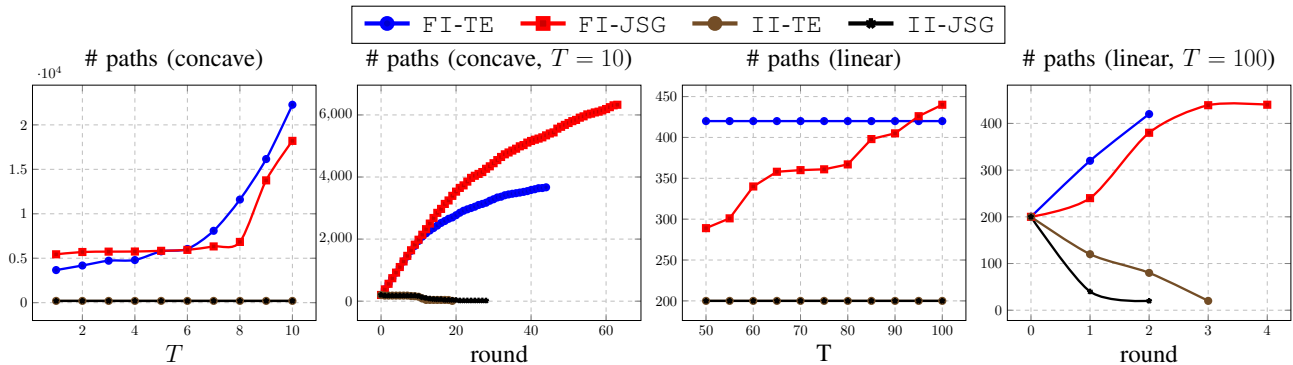
\begin{figure*}[t]
\begin{tikzpicture}[yscale=0.55, xscale=0.55]
    \begin{groupplot}[group style={group size= 4 by 1}]
        \nextgroupplot[title={\# paths (concave)},xlabel={$T$}, grid style=dashed, grid=both, grid style={line width=.1pt, draw=gray!10}, major grid style={line width=.2pt,draw=gray!50}, every axis plot/.append style={ultra thick, smooth},title style={font=\LARGE}, label style={font=\LARGE}]
                \addplot table [x=T, y=IC-TE, col sep=comma] {data/concave_T/path.csv};
                \addplot table [x=T, y=IC-JSG, col sep=comma] {data/concave_T/path.csv};
                \addplot table [x=T, y=II-TE, col sep=comma] {data/concave_T/path.csv};
                \addplot table [x=T, y=II-JSG, col sep=comma] {data/concave_T/path.csv};
                \coordinate (top) at (rel axis cs:0,1);
        \nextgroupplot[title={\# paths (concave, $T=10$)},xlabel={round}, grid style=dashed, grid=both, grid style={line width=.1pt, draw=gray!10}, major grid style={line width=.2pt,draw=gray!50}, every axis plot/.append style={ultra thick},title style={font=\LARGE}, label style={font=\LARGE}]
                \addplot table [x=round, y=IC-TE, col sep=comma] {data/concave_T/path_10_00.csv};
                \addplot table [x=round, y=IC-JSG, col sep=comma] {data/concave_T/path_10_00.csv};
                \addplot table [x=round, y=II-TE, col sep=comma] {data/concave_T/path_10_00.csv};
                \addplot table [x=round, y=II-JSG, col sep=comma] {data/concave_T/path_10_00.csv};
        \nextgroupplot[title={\# paths (linear)},xlabel={T}, grid style=dashed, grid=both, grid style={line width=.1pt, draw=gray!10}, major grid style={line width=.2pt,draw=gray!50}, every axis plot/.append style={ultra thick},title style={font=\LARGE}, label style={font=\LARGE}]
                \addplot table [x=T, y=IC-TE, col sep=comma] {data/linear_T/path.csv};
                \addplot table [x=T, y=IC-JSG, col sep=comma] {data/linear_T/path.csv};
                \addplot table [x=T, y=II-TE, col sep=comma] {data/linear_T/path.csv};
                \addplot table [x=T, y=II-JSG, col sep=comma] {data/linear_T/path.csv};
        \nextgroupplot[title={\# paths (linear, $T=100$)},xlabel={round}, grid style=dashed, grid=both, grid style={line width=.1pt, draw=gray!10}, major grid style={line width=.2pt,draw=gray!50}, every axis plot/.append style={ultra thick, smooth},title style={font=\LARGE}, label style={font=\LARGE}]
                \addplot table [x=round, y=IC-TE, col sep=comma] {data/linear_T/path_100_00.csv};
                \addplot table [x=round, y=IC-JSG, col sep=comma] {data/linear_T/path_100_00.csv};
                \addplot table [x=round, y=II-TE, col sep=comma] {data/linear_T/path_100_00.csv};
                \addplot table [x=round, y=II-JSG, col sep=comma] {data/linear_T/path_100_00.csv};
                \coordinate (bot) at (rel axis cs:1,0);
    \end{groupplot}
\path (top|-current bounding box.north)--
      coordinate(legendpos)
      (bot|-current bounding box.north);
\matrix[
    matrix of nodes,
    anchor=south,
    draw,
    inner sep=0.2em,
    draw,
    nodes={anchor=center},
  ]at([xshift=-3cm]legendpos)
  {
    \ref{plot:ic_te}& \fsi-\te&[5pt]
    \ref{plot:ic_jsg}& \fsi-\jsg&[5pt]
    \ref{plot:ii_te}& \ii-\te&[5pt]
    \ref{plot:ii_jsg}& \ii-\jsg\\};
\end{tikzpicture}
\caption{Algorithms' memory changes with various $T$}
 	\label{fig:no_path}
\end{figure*}

\subsubsection{Trade-off in term of $\varepsilon$}

In the final experiment, we investigate how different values of $\varepsilon$ impact our algorithms' performance. $\varepsilon$ represents how ``accurate" the returned solutions of our algorithms are to the requirement of \cspi. Intuitively, the smaller $\varepsilon$ is, the more accurate the solutions are, the closer lower bounds of distances between pairs of nodes on $S$ are to $T$. Fig. \ref{fig:solution_query_var} shows how our algorithms' returned solutions, their numbers of queries and stored paths change with various $\epsilon$.

From Fig. \ref{fig:solution_query_var}, we can see that the algorithm's returned impact amounts decrease with larger $\varepsilon$. This is intuitive since with more relaxed constraint, a smaller impact amount suffices. That is also reflected in our algorithms' theoretical approximation guarantee, in a way that the ratio is proportional to a term of $\ln \varepsilon^{-1}$. 

Beside the trade-off between solution accuracy and solution size, $\varepsilon$ also shows changes in the number of queries and stored paths of each algorithm. With algorithms involves \ii, large $\varepsilon$ helps decreasing number of queries, which totally contrasts with the one with \fsi. The behavior of \ii with various $\varepsilon$ is intuitively explained by the fact that: with a same path set, the more relaxed constraint should end up with the smaller overall impact needed. However, we found this fact does not applied with \fsi because the more relaxed constraint does not guarantee the fewer number of processed paths. That is shown in the third sub-figure in Fig. \ref{fig:solution_query_var}; we can see that the number of stored paths of \fsi increases with $\varepsilon$ grows. With more paths to process, \fsi's behavior becomes more complicated. Meanwhile, \ii is stable with the cap on the number of processing paths, which is at most $O(|S|k)$. 

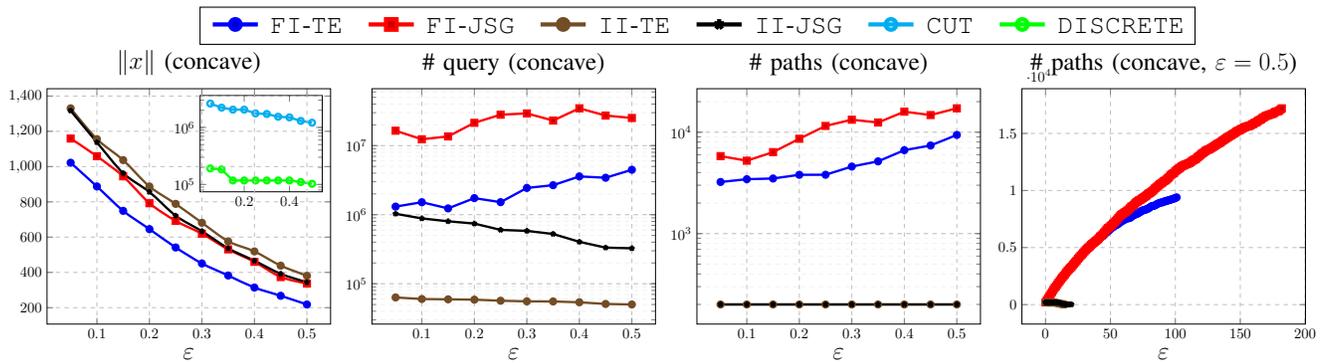
\begin{figure*}[t]
\begin{tikzpicture}[yscale=0.55, xscale=0.55]
    \begin{groupplot}[group style={group size= 4 by 1}]
        \begin{axis}[
                width=0.25\textwidth,ymode=log,
                xshift=3.7cm,
                yshift=3.2cm,
                every axis plot/.append style={ultra thick}
        ]
                \addplot[cyan,mark=o] table [x=var, y=CUT, col sep=comma] {data/concave_var/solution.csv};
                \addplot[green,mark=o] table [x=var, y=DISCRETE, col sep=comma] {data/concave_var/solution.csv};
        \end{axis}
        \nextgroupplot[title={$\norm{x}$ (concave)},xlabel={$\varepsilon$}, grid style=dashed, grid=both, grid style={line width=.1pt, draw=gray!10}, major grid style={line width=.2pt,draw=gray!50}, every axis plot/.append style={ultra thick, smooth},title style={font=\LARGE}, label style={font=\LARGE}]
                \addplot table [x=var, y=IC-TE, col sep=comma] {data/concave_var/solution.csv};
                \addplot table [x=var, y=IC-JSG, col sep=comma] {data/concave_var/solution.csv};
                \addplot table [x=var, y=II-TE, col sep=comma] {data/concave_var/solution.csv};
                \addplot table [x=var, y=II-JSG, col sep=comma] {data/concave_var/solution.csv};
                \coordinate (top) at (rel axis cs:0,1);
        \nextgroupplot[title={\# query (concave)},xlabel={$\varepsilon$}, ymode=log, grid style=dashed, grid=both, grid style={line width=.1pt, draw=gray!10}, major grid style={line width=.2pt,draw=gray!50}, every axis plot/.append style={ultra thick},title style={font=\LARGE}, label style={font=\LARGE}]
                \addplot table [x=var, y=IC-TE, col sep=comma] {data/concave_var/query.csv};
                \addplot table [x=var, y=IC-JSG, col sep=comma] {data/concave_var/query.csv};
                \addplot table [x=var, y=II-TE, col sep=comma] {data/concave_var/query.csv};
                \addplot table [x=var, y=II-JSG, col sep=comma] {data/concave_var/query.csv};
        \nextgroupplot[title={\# paths (concave)},xlabel={$\varepsilon$}, ymode=log, grid style=dashed, grid=both, grid style={line width=.1pt, draw=gray!10}, major grid style={line width=.2pt,draw=gray!50}, every axis plot/.append style={ultra thick},title style={font=\LARGE}, label style={font=\LARGE}]
                \addplot table [x=var, y=IC-TE, col sep=comma] {data/concave_var/path.csv};
                \addplot table [x=var, y=IC-JSG, col sep=comma] {data/concave_var/path.csv};
                \addplot table [x=var, y=II-TE, col sep=comma] {data/concave_var/path.csv};
                \addplot table [x=var, y=II-JSG, col sep=comma] {data/concave_var/path.csv};
        \nextgroupplot[title={\# paths (concave, $\varepsilon = 0.5$)},xlabel={$\varepsilon$}, grid style=dashed, grid=both, grid style={line width=.1pt, draw=gray!10}, major grid style={line width=.2pt,draw=gray!50}, every axis plot/.append style={ultra thick},title style={font=\LARGE}, label style={font=\LARGE}]
                \addplot table [x=round, y=IC-TE, col sep=comma] {data/concave_var/path_0_50.csv};
                \addplot table [x=round, y=IC-JSG, col sep=comma] {data/concave_var/path_0_50.csv};
                \addplot table [x=round, y=II-TE, col sep=comma] {data/concave_var/path_0_50.csv};
                \addplot table [x=round, y=II-JSG, col sep=comma] {data/concave_var/path_0_50.csv};
                \coordinate (bot) at (rel axis cs:1,0);
    \end{groupplot}
\path (top|-current bounding box.north)--
      coordinate(legendpos)
      (bot|-current bounding box.north);
\matrix[
    matrix of nodes,
    anchor=south,
    draw,
    inner sep=0.2em,
    draw,
    nodes={anchor=center},
  ]at([xshift=-3cm]legendpos)
  {
    \ref{plot:ic_te}& \fsi-\te&[5pt]
    \ref{plot:ic_jsg}& \fsi-\jsg&[5pt]
    \ref{plot:ii_te}& \ii-\te&[5pt]
    \ref{plot:ii_jsg}& \ii-\jsg&[5pt]
    \ref{plot:cut}& \texttt{CUT}&[5pt]
    \ref{plot:discrete}& \texttt{DISCRETE}\\};
\end{tikzpicture}
\caption{Trade-off in term of $\varepsilon$}
 	\label{fig:solution_query_var}
\end{figure*}

\subsection{Experiment Summary}

We summarize experimental results, showing advantages of our algorithms as follows:
\begin{itemize}
    \item Our algorithms outperform existing methods that needs an intermediate discretization step in most cases. Even in the special instance of \cspi with ``discrete" (step) function, one of our algorithm (\fsi-\jsg) performed comparably to the state-of-the-art solution. 
    \item Each of our algorithm has strengths in different aspects, to be specific:
    \begin{itemize}
        \item With the \tb oracle, algorithms involving \jsg tend to get better solution quality. Meanwhile, the ones with \te have advantage in the number of queries on global optimum of a univariate function.
        \item With the \cpl oracle, \fsi has strengths in solution quality while \ii shows to save memory in term of the number of stored feasible paths, which plays a role on saving the number of queries in the \tb oracle as well.
    \end{itemize}
    \item $\varepsilon$ allows user control the trade-off between solution quality and accuracy to the input constraint. Moreover, algorithms involving \ii benefit from $\varepsilon$ in the way that larger $\varepsilon$ helps reduce their runtime.
\end{itemize}

\section{Conclusion} \label{sec:conclusion}

We studied the \cspi problem, in which we proposed multiple algorithms with different performance guarantees. Theoretical evaluation and experimental analysis are provided, supporting users on deciding which combinations are the best for their needs. Indeed, there are still significant works to improve in the future. A node could be associated with multiple functions, serving for multiple objectives of system's functionality. Also, each function can have multiple variables and each variable could appear on more than one function, making the problem become much more complicated. How to balance those multiple objectives is still an open problem. 


\bibliographystyle{IEEEtran}
\bibliography{sample-bibliography}

\begin{IEEEbiography}[{\includegraphics[width=1in,height=1.25in,clip,keepaspectratio]{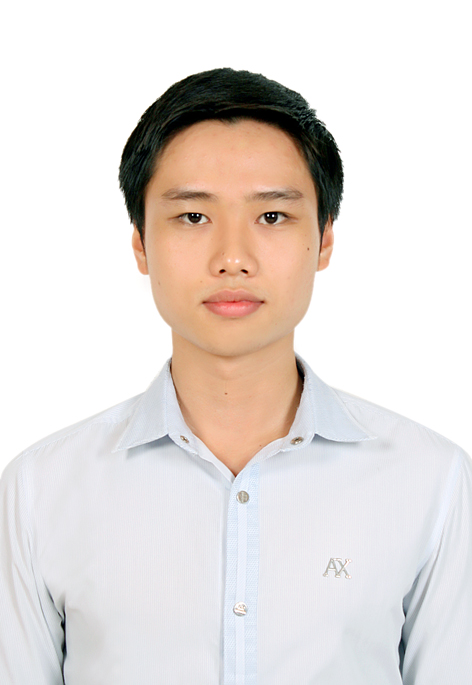}}]{Lan N. Nguyen} received his Degree of Engineer in Information Technology from Hanoi University of Science and Technology, Vietnam in 2014. He has been a PhD student under the supervisor of Dr. My T. Thai in the CISE department at the University of Florida since Spring 2017. His current research interests is on proposing lightweight algorithms to solve large-scale problems with application on Machine Learning or Network Optimization.
\end{IEEEbiography}

\begin{IEEEbiography}[{\includegraphics[width=1in,height=1.25in,clip,keepaspectratio]{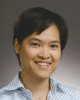}}] {My T. Thai}(M'06)
is a UF Research Foundation Professor at the Computer and Information Science and Engineering department, University of Florida. Her current research interests are on scalable algorithms, big data analysis, cybersecurity, and optimization in network science and engineering, including communication networks, smart grids, social networks, and their interdependency. The results of her work have led to 6 books and 170+ articles, including IEEE MSN 2014 Best Paper Award, 2017, IEEE ICDM Best Papers Award, 2017 IEEE ICDCS Best Paper Nominee, and 2018 IEEE/ACM ASONAM Best Paper Runner up. 

Prof. Thai has engaged in many professional activities. She has been a TPC-chair for many IEEE conferences, has served as an associate editor for \textit{IEEE Transactions} on \textit{Parallel and Distributed Systems}, \textit{IEEE Transactions} on \textit{Network Science} and \textit{Engineering}, and a series editor of \textit{Springer Briefs} in \textit{Optimization}. She is a founding Editor-in-Chief of the \textit{Computational Social Networks} journal, and Editor-in-Chief of \textit{Journal} of \textit{Combinatorial Optimization} (\textit{JOCO}). She has received many research awards including a UF Provosts Excellence Award for Assistant Professors, UFRF Professorship Award, a Department of Defense (DoD) Young Investigator Award, and an NSF (National Science Foundation) CAREER Award.
\end{IEEEbiography}


\end{document}